\newtheorem{theorem}{Theorem}[section]
\newtheorem{proposition}[theorem]{Proposition}
\newtheorem{lemma}[theorem]{Lemma}
\newtheorem{corollary}[theorem]{Corollary}
\theoremstyle{definition}
\newtheorem{definition}[theorem]{Definition}
\newtheorem{example}[theorem]{Example}
\theoremstyle{remark}
\newtheorem{remark}[theorem]{Remark}
\numberwithin{equation}{section}
 \newcommand{\ud}{\,\mathrm{d}}
\def\A{\mathfrak{A}}
\def\XA{\mathfrak{X}(M)^{\mathfrak{A}}}
\def\XM{\mathfrak{X}M}
\def\TX{\tilde{X}}
\def\TY{\tilde{Y}}
\def\TZ{\tilde{Z}}
\def\R{\mathbb{R}}
\title{Operator-Valued Tensors on Manifolds: \\A Framework for Field Quantization}
\author{\large H. Feizabadi$^1$, \hspace{.0cm}\large N. Boroojerdian$^2$\footnote{Corresponding author. $\newline$
{\em E-mail addresses:}broojerd@aut.ac.ir(N. Boroojerdian), hassan\_feiz1970@aut.ac.ir(H. Feizabadi). $\newline$
{2010 Mathematics Subject Classification. Primary: 65F05; Secondary: 46L05, 11Y50.}} \hspace{.0cm} \vspace{.5cm}$ $\\
\small{   $^{1,2}$Department of pure Mathematics,
Faculty of Mathematics and Computer Science,}\vspace{-1mm}\\
\small{ Amirkabir University of Technology, No. 424, Hafez Ave.,
Tehran, Iran.}}
\date{}
\begin{document}
 
\maketitle \thispagestyle{empty}\

\begin{abstract}
In this paper we try to prepare a framework for field quantization. To this end,  we aim to replace the field of scalars $ \mathbb{R} $ by self-adjoint elements of a commutative $ C^\star $-algebra, and reach an appropriate generalization of  geometrical concepts on manifolds. First, we put forward the concept of operator-valued tensors and extend semi-Riemannian metrics to operator valued metrics. Then, in this new geometry, some essential concepts of Riemannian geometry such as curvature tensor, Levi-Civita connection, Hodge star operator, exterior derivative, divergence,... will be considered.\\
\textbf{Keywords:} Operator-valued tensors, Operator-Valued Semi-Riemannian Metrics, Levi-Civita Connection, Curvature, Hodge star operator  \\
\textbf{MSC(2010):}  Primary: 65F05; Secondary: 46L05, 11Y50.
\end{abstract}
\setcounter{section}{-1}
\section{\bf Introduction}
\label{intro}
 The aim of the present paper is to extend the theory of semi-Riemannian metrics to operator-valued semi-Riemannian metrics, which may be used as a framework for field quantization. Spaces of linear operators are directly related to $C^*$-algebras, so $C^*$-algebras are considered as a basic concept in this article. This paper has its roots in Hilbert $C^\star$-modules, which are frequently used in the theory of operator algebras, allowing one to obtain information about $C^\star$-algebras by studying Hilbert $C^\star$-modules over them. Hilbert $C^\star$-modules provide a natural generalization of Hilbert spaces arising when the field of scalars $\mathbb{C}$ is replaced by an arbitrary $C^\star$-algebra. This generalization, in the case of commutative $C^\star$-algebras appeared in the paper of Kaplansky \cite{kap}, however the non-commutative case seemed too complicated at that time.The general theory of Hilbert $C^\star$-modules appeared in the pioneering papers of W. Paschke \cite{Pasch} and M. Rieffel \cite{rifl}.The theory of Hilbert $C^\star$-modules may also be considered as a non-commutative generalization of the theory of vector bundles and non-commutative geometry \cite{cun}.

A number of results about geometrical structures of Hilbert $C^\star$-modules and about operators on them have been obtained \cite{fran}.
Henceforth, Hilbert $C^\star$-modules are generalization of inner product spaces that on the level of manifolds, provide a generalization of semi-Riemannian manifolds,that is the goal of this paper. Due to the physical applications in mind, in this paper, the positive definiteness of the inner product will be replaced by non-degeneracy. Another root of this paper is to provide a framework for field quantization. The main idea for quantization is to replace scalars by operators on a Hilbert space. In the field of quantum mechanics, spectrum of operators plays the role of values of the measurements. So replacements of scalars by operators is the first step for quantization. In this direction, it has been done many works \cite{avr1}\cite{avr2}, but it seems $C^*$-algebras are the best candid to play the role of scalars and we must deal with Modules over $C^*$-algebras.  Modules, discussed in this article are free finite dimensional modules, and have bases that simplify many computations.

In this article, we only consider commutative $C^*$-algebras for many reason. Non-commutative algebras can be used in the realm of non-commutative Geometry, and it is not our aim to enter in this realm. Many basic definition such as vector field can not be extended properly for non-commutative $C^*$-algebras, because the set of derivations of an algebra is a module on the center of that algebra. So, extension of vector fields are modules on the center of that $C^*$-algebra and we need center of algebra be equal to the algebra, so the algebra must be commutative. Definition of the inner product encounter the same problem. Bilinear maps over a module whose scalars are non-commutative are very restricted. From Physical point of view, commutativity means operators in the $C^*$-algebra represent quantities that are simultaneously measurable and this assumption is not very restricted.

The content of the present paper is structured as follows: Section $1$ contains the preliminary facts about of $C^{\star}$-algebras needed to explain our concepts, Section $2$ covers the definition of extended tangent bundle, operator-valued vector fields, and operator-valued tensors and explains some of their basic properties. The Pettis-integral of operator-valued volume forms are defined and Stokes's theorem is proved in section $3$. Section $4$ is devoted to operator-valued connections and curvature, and the covariant derivative of operator-valued vector fields will be defined.  In section $5$, operator-valued inner product and some of their basic properties will be illustrated. The existence,uniqueness and the properties of the Hodge star operator for operator-valued inner product spaces are the goal of Section 6. In section $7$ the concepts of section $5$ extend to manifolds. The existence and uniqueness of Levi-Civita connection of operator-valued semi-Riemannian metrics is proved in section $8$. In the last section Ricci tensor, scalar curvature, and sectional curvature are deliberated.
\section{\bf Review of $C^{\star}$-Algebras}
In this section we review some definitions and results from $C^{\star}$-algebras that we need in the sequel.
\begin{definition}
A Banach $\star$-algebra is a complex Banach algebra $\mathfrak{A}$ with a conjugate linear involution $\ast$ which is an anti-isomorphism. That is, for all $A,B\in\mathfrak{A}$ and $\lambda\in\mathbb{C}$,
$$\left( A+B\right) ^{\ast }=A^{\ast }+B^{\ast },\quad 
\left( \lambda A\right) ^{\ast } =\overline{\lambda }A^{\ast },\quad
A^{\ast \ast } =A,\quad
\left( AB\right) ^{\ast } =B^{\ast }A^{\ast }$$.

\end{definition}
\begin{definition}
A $C^{\star}$-algebra $\A$, is a Banach $\star$-algebra with the additional norm condition. For all $ A\in\mathfrak{A}$
\begin{eqnarray*}
\left\Vert A^{\ast}A\right\Vert =\left\Vert A\right\Vert ^{2}.
\end{eqnarray*}
\end{definition}

For example the space of all bounded linear operators on a Hilbert space of $\mathcal{H}$ is a $C^{\star}$-algebra. This $C^{\star}$-algebra is denoted by $\mathfrak{B}\left(\mathcal{H}\right)$
\begin{remark}
We consider only unital $C^{\star}$-algebras, and it's unit element is denoted  by $1$.
\end{remark}

By Gelfand-Naimark theorem, all unital commutative $C^{\star}$-algebras have the form $C(X)$, in which $X$ is a compact Hausdorff space.
\begin{definition}
 An element $A$ of a $C^{\star}$-algebra is said to be
 \textbf{self-adjoint} if $A^{\ast}=A$, \textbf{normal} if $A^{\ast }A= AA^{\ast }$, and \textbf{unitary} if $A^{\ast }A= AA^{\ast }=1$.  
\end{definition}

 For now on, $\A$ is a $C^*$-algebra. The set of all self-adjoint elements of $\mathfrak{A}$ is denoted by $\mathfrak{A}_{\mathbb{R}}$.
\begin{definition}
The spectrum of an element $A$ in a  $C^{\star}$-algebra is the set
\[\sigma(A)=\{z\in\mathbb{C}: z1-A\: \mbox{is not invertible}\}.\]
\end{definition}
\begin{theorem}
The spectrum $\sigma(A)$ of any element A of a $C^{\star}$-algebra is a non empty compact set and
contained in the $\{z\in\mathbb{C}
:\left\vert z\right\vert \leqslant\left\Vert A\right\Vert \}$,
 if $A\in\mathfrak{A}_{\mathbb{R}}$, then $\sigma(A)\subseteq\mathbb{R}.$  (\cite{kan},\cite{rud})
 \end{theorem}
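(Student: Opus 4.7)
The statement bundles four facts, and my plan is to dispatch them in a natural order: boundedness by $\|A\|$, closedness, non-emptiness, and the reality of the spectrum of self-adjoint elements. The only one of these that uses the $C^*$-identity in an essential way is the last; the first three are pure Banach-algebra arguments.

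For the norm bound, I would invert $z1-A = z(1 - A/z)$ whenever $|z|>\|A\|$ via the Neumann series $\sum_{n\geq 0}(A/z)^{n}$, which converges absolutely since $\|A/z\|<1$. This both proves the inclusion $\sigma(A)\subseteq\{|z|\leq\|A\|\}$ and yields the resolvent estimate $\|(z1-A)^{-1}\|\leq (|z|-\|A\|)^{-1}$ which I will reuse below. For closedness I would invoke the standard fact that the invertible elements form an open set in any unital Banach algebra (near an invertible $B$, any $C$ with $\|B-C\|<\|B^{-1}\|^{-1}$ satisfies that $B^{-1}C$ is invertible by another Neumann series), so the preimage of the invertibles under the continuous map $z\mapsto z1-A$ is open, and its complement $\sigma(A)$ is closed. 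Boundedness plus closedness gives compactness.

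For non-emptiness I would argue by contradiction using the resolvent $R(z)=(z1-A)^{-1}$. If $\sigma(A)=\emptyset$ then $R$ is defined on all of $\mathbb{C}$ and norm-holomorphic; the resolvent identity $R(w)-R(z)=R(w)(z-w)R(z)$ makes this a one-line difference-quotient calculation. Since $\|R(z)\|\to 0$ as $|z|\to\infty$ by the estimate above, for any $\phi\in\mathfrak{A}^{*}$ the scalar function $\phi\circ R$ is a bounded entire function vanishing at infinity, hence identically zero by Liouville. Hahn--Banach then forces $R\equiv 0$, contradicting $R(z)(z1-A)=1$. I expect the only subtlety here to be the justification that $R$ is holomorphic in the Banach-space sense, but this is routine from the resolvent identity.

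Finally, for $A\in\mathfrak{A}_{\mathbb{R}}$ I would use the $C^{\star}$-identity in the key trick. Suppose $\lambda=a+ib\in\sigma(A)$ with $a,b\in\mathbb{R}$. For any $t\in\mathbb{R}$, $(A+it\cdot 1)-(\lambda+it)1=A-\lambda 1$ is non-invertible, so $\lambda+it\in\sigma(A+it\cdot 1)$. Applying the norm bound already proved and then the $C^{\star}$-identity gives
\[
a^{2}+(b+t)^{2}=|\lambda+it|^{2}\leq\|A+it\cdot 1\|^{2}=\|(A+it\cdot 1)^{*}(A+it\cdot 1)\|=\|A^{2}+t^{2}\cdot 1\|\leq\|A\|^{2}+t^{2},
\]
where self-adjointness of $A$ was used in $(A+it\cdot 1)^{*}(A+it\cdot 1)=A^{2}+t^{2}\cdot 1$. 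Rearranging yields $a^{2}+b^{2}+2bt\leq\|A\|^{2}$ for every $t\in\mathbb{R}$, which forces $b=0$. The only genuinely nontrivial step is this last one, and the point it hinges on is precisely the $C^{\star}$-identity $\|B^{*}B\|=\|B\|^{2}$: without it one has only the submultiplicative bound $\|B\|^{2}\leq\|B^{*}\|\|B\|$, which is too weak to close the estimate.
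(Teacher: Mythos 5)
Your proof is correct in all four parts (the Neumann-series norm bound, openness of the invertibles for closedness, the Liouville/Hahn--Banach argument for non-emptiness, and the $A+it\cdot 1$ trick with the $C^{\star}$-identity for reality of the spectrum of self-adjoint elements). The paper gives no proof of its own --- it simply cites Kaniuth and Rudin --- and your argument is exactly the standard one found in those references, so there is nothing to compare beyond noting that you have supplied the proof the paper omits.
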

 
\begin{remark}
For any $A\in\A$, if $\lambda\in\sigma(A)$, then $\vert\lambda\vert\leq\Vert A\Vert$.
\end{remark}
For each normal element $A\in\mathfrak{A}$ there is a smallest $C^{\star}$-subalgebra $C^{\ast}(A,1)$ of $\mathfrak{A}$ which contains $A$, $1$, and is isomorphic to $C(\sigma(A))$.

 An element $A\in\mathfrak{A}_{\mathbb{R}}$ is said positive if $\sigma(A)\subseteq\mathbb{R}_+$. The set of positive elements of $\mathfrak{A}_{\mathbb{R}}$ is denoted by $\mathfrak{A}_{\mathbb{R}_+}$. For any $A\in\mathfrak{A}$,  $A^{\ast}A$ is positive.
\begin{theorem} 
If $A\in\mathfrak{A}_{\mathbb{R}_+}$, then there exists a unique element $B\in\mathfrak{A}_{\mathbb{R}_+}$ such that $B^2=A$. \cite{mu}
\end{theorem}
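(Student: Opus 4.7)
The plan is to leverage the Gelfand isomorphism $C^{\ast}(A,1)\cong C(\sigma(A))$ that the excerpt supplies for any normal element. Since positive elements are by definition self-adjoint, hence normal, this is available for $A$, and I would build the square root as the image of the continuous function $t\mapsto\sqrt{t}$ under this isomorphism.

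\textbf{Existence.} Because $A\in\mathfrak{A}_{\mathbb{R}_+}$ we have $\sigma(A)\subseteq\mathbb{R}_+$, so $f(t)=\sqrt{t}$ is a well-defined continuous real-valued function on $\sigma(A)$. Let $B\in C^{\ast}(A,1)$ be the element corresponding to $f$ under the inverse Gelfand isomorphism $C(\sigma(A))\to C^{\ast}(A,1)$. Since $f$ is real-valued, $B$ is self-adjoint; since the spectrum of an element in $C^{\ast}(A,1)$ equals the range of the corresponding function, $\sigma(B)=f(\sigma(A))=\sqrt{\sigma(A)}\subseteq\mathbb{R}_+$, so $B\in\mathfrak{A}_{\mathbb{R}_+}$. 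Finally, $f^2$ corresponds to the identity function on $\sigma(A)$, which in turn corresponds to $A$, so $B^2=A$.

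\textbf{Uniqueness.} Suppose $C\in\mathfrak{A}_{\mathbb{R}_+}$ also satisfies $C^2=A$. Then $CA=CC^2=C^3=C^2C=AC$, so $C$ commutes with $A$, hence with every polynomial in $A$ and $1$. Because $B$ lies in $C^{\ast}(A,1)$, which is the norm closure of such polynomials, $C$ commutes with $B$ as well. The unital $C^{\ast}$-subalgebra generated by $\{B,C,1\}$ is therefore commutative, and by Gelfand--Naimark it is isomorphic to $C(X)$ for some compact Hausdorff $X$. Under this isomorphism $B$ and $C$ correspond to continuous functions $b,c\colon X\to\mathbb{R}_+$ (non-negative since $B,C$ are positive) satisfying $b^2=c^2$; pointwise uniqueness of the non-negative square root of a real number gives $b=c$, and hence $B=C$.

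\textbf{Main obstacle.} The delicate step is the commutation argument inside uniqueness: one must notice that $B$ sits in the closure of polynomials in $A$ so that $C$'s commutation with $A$ propagates to commutation with $B$, which in turn allows the reduction to the commutative algebra $C(X)$ where the pointwise argument is trivial. The existence direction is essentially a bookkeeping application of the functional calculus once one verifies the spectrum behaves as expected.
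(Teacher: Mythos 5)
Your proof is correct, and it is essentially the standard argument from Murphy's book, which is exactly the source the paper cites for this theorem: the paper itself supplies no proof, deferring entirely to \cite{mu}. Both the existence step via the functional calculus on $C^{\ast}(A,1)\cong C(\sigma(A))$ and the uniqueness step (propagating commutation with $A$ to commutation with $B$ and then passing to a commutative subalgebra) match that reference's treatment, so there is nothing to reconcile.
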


We denote by $\sqrt{A}$ the unique positive element $B$ such that $B^2=A$.
If $A$ is a self-adjoint element, then $A^2$ is positive, and we set $|A|=\sqrt{A^2}$,  $A^+=\frac{1}{2}(|A|+A)$, $A^-=\frac{1}{2}(|A|-A)$. Elements
 $|A|$, $A^+$, and $A^-$ are positive and $A=A^+-A^-$, $A^+A^-=0$. If $A,B\in\mathfrak{A}_{\mathbb{R}}$, then $\vert AB\vert=\vert A\vert \vert B\vert$. (cf. \cite{mu}).
\section{\bf Extending Tangent Bundle}
Throughout this paper, $\mathfrak{A}$ is a commutative unital $C^{\star}$-algebra which according to the \textbf{Gelfand-Naimark} second theorem can be thought as a $C^{\star}$-subalgebra of some $\mathfrak{B}\left(\mathcal{H}\right)$.
Let $M$ be a smooth manifold, we set
$TM^{\mathfrak{A}}=\underset{p\in M}{\cup}(T_{p}M\otimes_{\mathbb{R}}\mathfrak{A)}$, so $TM^{\mathfrak{A}}$ is a bundle of free $\mathfrak{A}$-modules over $M$.  Smooth functions from $M$ to $\mathfrak{A}$ can be defined and the set of them is denoted by $C^{\infty}(M,\mathfrak{A}).$ 
Addition, scalar multiplication, and  multiplication of functions in $C^{\infty}(M,\mathfrak{A})$ are defined pointwise. The involution of $\mathfrak{A}$ can be extended to $C^{\infty}(M,\mathfrak{A})$ as follows:
\begin{align*}
\ast:C^\infty(M,\mathfrak{A}&)\longrightarrow C^\infty(M,\mathfrak{A})\\
f&~\longmapsto f^*
\end{align*}
where $ f^{\ast}(x)=f(x)^{\ast} $.

Let $\A=C(X)$ for some compact Hausdorff space $X$. So, any function  $\tilde{f}:M\longrightarrow\A$ corresponds to a function $f:M\times X\longrightarrow\mathbb{C} $. Therefore, when $\A$ come to scene it means that we transfer from $M$ to $M\times X$ as the base manifold. In Physical applications, $M$ is a space-time manifold and $X$ may have variety of interpretations. For example, $X$ may be viewed as internal space of particles that produce quantum effects. In general, $X$ has no intrinsic relation to $M$.
\begin{definition}
A $\mathfrak{A}$-vector field $\tilde{X}$ over $M$ is a section of the bundle $ TM^{\mathfrak{A}} $.
\end{definition}

The set of all smooth $\mathfrak{A}$-vector fields on $M$ is denoted by $\mathfrak{X}(M)^{\mathfrak{A}}$, in fact
\[
 \mathfrak{X}(M)^{\mathfrak{A}}=\mathfrak{X}(M)\otimes_{C^{\infty}(M)} C^{\infty}(M,\mathfrak{A}). 
 \]
 Smooth $\mathfrak{A}$-vector fields can be multiplied by smooth
$\mathfrak{A}$-valued functions and $\mathfrak{X}(M)^{\mathfrak{A}}$ is a module over the $\star$-algebra $C^{\infty}(M,\mathfrak{A})$.
For a vector field $X\in\mathfrak{X}(M)$ and a function $ f\in C^{\infty}(M,\mathfrak{A}) $, we define $X\otimes f$ as a $\A$-vector field by
\[
(X\otimes f)_{p}=X_{p}\otimes f(p).
\]
These fields are called  simple, and any $\mathfrak{A}$-vector field can be written locally as a finite sum of simple $\mathfrak{A}$-vector fields. We can identify $X$ and $X\otimes 1$, so $\mathfrak{X}(M)$ is a $C^{\infty}(M)$-subspace of  $\mathfrak{X}(M)^{\mathfrak{A}}$.

A smooth vector field $X\in\mathfrak{X}(M)$ defines a derivation on $C^{\infty}(M,\mathfrak{A})$ by $ f\longmapsto Xf. $ In fact, for any integral curve $ \alpha:I \longrightarrow M $ of $X$, we have, $(Xf)(\alpha(t))=\frac{\ud}{\ud t}f(\alpha(t))$. 

This operation can be extended to $\mathfrak{A}$-vector fields, such that for simple elements of 
$\mathfrak{X}(M)^{\mathfrak{A}}$ such as $X\otimes h$ we have
\[
(X\otimes h)(f)=h.(Xf)\qquad f\in C^{\infty}(M,\mathfrak{A}) 
\]
If $\A$ is non-commutative, this definition is not well-defined. This definition implies that for $f, h\in C^{\infty}(M,\mathfrak{A})$ and $\widetilde{X}\in\mathfrak{X}(M)^{\mathfrak{A}}$ we have 
\[ (h\widetilde{X})(f)=h.(\widetilde{X}f). \]
The Lie bracket of ordinary vector fields can be extended to $\mathfrak{A}$-vector fields as the following, if $X,Y\in\mathfrak{X}(M)$ and $h,k\in C^{\infty}(M,\mathfrak{A})$ then,
\[
\left[  X\otimes h\ ,Y\otimes k\right]  =\left[  X ,Y\right]  \otimes (hk)+Y\otimes(hX(k))-X\otimes(kY(h)).
\]
The verification of main properties of the Lie bracket are routine.\\
An involution on $\mathfrak{X}(M)^{\mathfrak{A}}$ for simple $\mathfrak{A}$-vector field, such as $  X\otimes f $, is defined by 
\[
(X\otimes f)^{\ast}=X\otimes f^{\ast}. 
\]
For  $ \widetilde{X}, \widetilde{Y}\in \mathfrak{X}(M)^{\mathfrak{A}}, f\in  C^{\infty}(M,\mathfrak{A}) $, we have 
\begin{equation}
\widetilde{X}(f)^\ast=\widetilde{X}^\ast(f^\ast), 
 \left[\widetilde{X},\widetilde{Y}\right]^\ast= \left[\widetilde{X}^\ast,\widetilde{Y}^\ast\right].
 \end{equation} 
 \begin{definition}
A $ \mathfrak{A} $-valued covariant tensor field of order $k$ on $M$ is an operator $T:\XA\times\cdots\times\XA\longrightarrow C^{\infty}(M,\mathfrak{A})$ such that is $k-C^{\infty}(M,\A)$-linear.
\end{definition}
Contravariant and mixed $\mathfrak{A}$-valued tensors can be defined in a similar way.
Alternating covariant $\A$-tensor fields are called $\mathfrak{A}$-differential forms, and the set of all $\A$-differential forms of order $k$ is denoted by $A^k(M,\A)$. $\mathfrak{A}$-differential forms and exterior product and exterior derivation of these forms are special case of vector valued differential forms. The only difference is that $\mathfrak{A}$ maybe infinite dimensional.

Lie derivation of $\A$-tensor fields along $\A$-vector fields is defined naturally. For $\tilde{X}\in\XA$ and a covariant $\A$-tensor field of order $k$, such as $\tilde{T}$, Lie derivation of $\tilde{T}$ along $\tilde{X}$ is also a covariant $\A$-tensor field of order $k$ and defined as follows. For $\tilde{Y_1},\cdots ,\tilde{Y_k}\in\XA$ we have
\[
(L_{\tilde{X}} \tilde{T})(\tilde{Y_1},\cdots ,\tilde{Y_k})=\tilde{X}(\tilde{T}(\tilde{Y_1},\cdots ,\tilde{Y_k}))-\sum_{i=1}^k \tilde{T}(\tilde{Y_1},\cdots ,[\tilde{X},\tilde{Y_i}],\cdots ,\tilde{Y_k})
\]
If $\tilde{T}$ is a differential form, then its Lie derivation along any $\A$-vector field is also a differential form and Cartan formula for its derivation holds 
.i.e. 
\[
L_{\tilde{X}} \tilde{T}=d(i_{\tilde{X}}\tilde{T})+i_{\tilde{X}}(d\tilde{T})
\]

\section{\bf Integration of $\mathfrak{A}$-Valued Volume Forms and Stokes's Theorem}
 We remind the integral of vector valued functions, called \textbf{Pettis}-integral \cite{Uhl}. The Borel $\sigma$-algebra over $\mathbb{R}^n$ is denoted by $\mathcal{B}_{n}=\mathcal{B}(\mathbb{R}^{n})$, and suppose that $ \mu $ is the Lebesgue measure. 
\begin{definition}
Suppose $V\in \mathcal{B}_{n}$. A measurable function $f:V\longrightarrow\mathfrak{A}$ is called
\item[(i)] weakly integrable if $\Lambda(f)$ is Lebesgue integrable for every $\Lambda \in\mathfrak{A}^{\ast}$
 \item[(ii)] Pettis integrable if there exists  $ x\in\mathfrak{A} $ such that $ \Lambda(x)=\int_V \Lambda(f)\,\mathrm{d}\mu $, for every $\Lambda \in\mathfrak{A}^{\ast}$.
\end{definition}

If $f$ is Pettis-integrable over $V\in \mathcal{B}_{n}$ then $ x $ is unique and is called Pettis-integral of $f$ over $V$. We use the
notations $\int_V f\,\mathrm{d}\mu $ or $(P)\int_V f\,\mathrm{d}\mu$ to show the 
Pettis-integral of $f$ over $ V $. It is proved that each function $ f\in C_c(\mathbb{R}^n,\mathfrak{A}) $ is Pettis integrable over any  $V\in \mathcal{B}_{n}$ (cf. \cite{rud}).
\begin{theorem} [Change of Variables] Suppose $ D $ and $ D^{\prime} $  are open domains of
integration in $ \mathbb{R}^n $, and $ G:D\longrightarrow D^{\prime} $
 is a diffeomorphism . For every Pettis-integrable function $ f: D^{\prime} \longrightarrow \mathfrak{A} $, 
$$\int_{D^{\prime}}f\,\mathrm{d}\mu=\int_D (f\circ G)\,|\mathrm{det}(DG)|\,\mathrm{d}\mu.$$
\end{theorem}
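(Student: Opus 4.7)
The plan is to reduce the statement to the classical scalar change of variables formula by testing against all continuous linear functionals $\Lambda\in\mathfrak{A}^{\ast}$ and then invoking the uniqueness part of the definition of the Pettis integral.

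First I would observe that since $G$ is a diffeomorphism it is, in particular, a homeomorphism, so $f\circ G$ is Borel measurable $D\to\mathfrak{A}$; multiplying by the continuous real-valued function $|\det(DG)|$ preserves measurability, so $(f\circ G)\,|\det(DG)|$ is a measurable $\mathfrak{A}$-valued function on $D$. Set $x=\int_{D'}f\ud\mu\in\mathfrak{A}$; I want to show that $(f\circ G)\,|\det(DG)|$ is Pettis integrable on $D$ with integral $x$.

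Fix any $\Lambda\in\mathfrak{A}^{\ast}$. Since $\Lambda$ is $\mathbb{C}$-linear and $|\det(DG)|$ is pointwise real-valued, we have the pointwise identity
\[
\Lambda\bigl((f\circ G)(p)\,|\det(DG)(p)|\bigr)=|\det(DG)(p)|\,\Lambda(f(G(p)))=\bigl(\Lambda(f)\circ G\bigr)(p)\,|\det(DG)(p)|.
\]
By hypothesis, $\Lambda(f):D'\to\mathbb{C}$ is Lebesgue integrable, and the classical (scalar) change of variables theorem applied to the real and imaginary parts of $\Lambda(f)$ yields
\[
\int_{D'}\Lambda(f)\ud\mu=\int_D \bigl(\Lambda(f)\circ G\bigr)\,|\det(DG)|\ud\mu=\int_D \Lambda\bigl((f\circ G)\,|\det(DG)|\bigr)\ud\mu.
\]
The left-hand side equals $\Lambda(x)$ by the defining property of the Pettis integral of $f$. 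Thus for every $\Lambda\in\mathfrak{A}^{\ast}$ the scalar function $\Lambda\bigl((f\circ G)\,|\det(DG)|\bigr)$ is Lebesgue integrable and its integral equals $\Lambda(x)$. By the very definition of Pettis integrability, this says that $(f\circ G)\,|\det(DG)|$ is Pettis integrable on $D$ with Pettis integral $x$, which is the desired identity.

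The only delicate point is the interchange of $\Lambda$ with the product $(f\circ G)\,|\det(DG)|$; this is legitimate precisely because $|\det(DG)|$ is a scalar (real) function, so linearity of $\Lambda$ applies fiberwise without any commutativity issues in $\mathfrak{A}$. Uniqueness of the Pettis integral (valid whenever $\mathfrak{A}^{\ast}$ separates points of $\mathfrak{A}$, which holds for any $C^{\ast}$-algebra by Hahn--Banach) then makes the identification $x=\int_D(f\circ G)\,|\det(DG)|\ud\mu$ unambiguous, completing the argument.
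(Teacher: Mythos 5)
Your proposal is correct and follows exactly the route the paper indicates: reduce to the scalar case by applying an arbitrary $\Lambda\in\mathfrak{A}^{\ast}$, invoke the classical change of variables theorem, and conclude via the definition (and uniqueness) of the Pettis integral. The paper's own proof is a one-line sketch of precisely this argument, so your write-up simply supplies the details it omits.
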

\begin{proof}
Applying the Pettis-integral's definition and using classical change of variables theorem, one can conclude the desired result .
 \end{proof}
 \begin{definition} 
An $\mathfrak{A}$-valued $n$-form on $ M \ (n=\dim M)$ is called an $\mathfrak{A}$-valued volume form on $M$.
 \end{definition}
In the canonical coordinate system, an $\mathfrak{A}$-valued volume form $\tilde{\omega} $ on $\R^n$ is written as follows:
\begin{equation} \label{int}
\widetilde{\omega}=f\,\mathrm{d}x^1\wedge\cdots \wedge\mathrm{d}x^n,  
\end{equation}
where $f\in C^{\infty}(\mathbb{R}^n,\mathfrak{A}).$ 
\begin{definition}
Let $ \tilde{\omega} $ be a compactly supported   $\mathfrak{A}$-valued $n$-form on $\mathbb{R}^n$. Define the integral of $ \tilde{\omega} $ over $\mathbb{R}^n$ by,
\[ \int_{\mathbb{R}^n} \widetilde{\omega}=(P)\int_{\mathbb{R}^n}\,f\,\ud x_1\cdots \ud x_n \]  where   $\tilde{\omega} $ is as \eqref{int}. 
 \end{definition}

 \begin{definition} 
For $\Lambda\in\mathfrak{A}^{\ast}$, define the operator $\Lambda:\mathcal{A}^{k}(M,\mathfrak{A})\longrightarrow\mathcal{A}^{k}(M)$ by 
$$(\Lambda\tilde\omega)(X_1,...,X_k)=\Lambda(\widetilde\omega(X_1,...,X_k)),$$ where $\widetilde\omega\in\mathcal{A}^{k}(M,\mathfrak{A})$, $ X_i\in \mathfrak{X}(M). $
 \end{definition}
 \begin{definition} 
Let $M$ be an oriented smooth $n$-manifold, and let  $ \tilde{\omega} $  be a $\mathfrak{A}$-valued $n$-form on $M$. First suppose
that the support of $ \tilde{\omega} $  is compact and include in the domain of a single chart $(U, \varphi)$ which is positively oriented. We define the integral of  $ \tilde{\omega} $  over M as
\begin{equation} \label{e12}
\int_M \widetilde{\omega}=(P)\int_{\varphi(U)}(\varphi^{-1})^{\ast}(\widetilde{\omega}). \end{equation}
 \end{definition}
 
 By using the change of variable's theorem one can prove that $ \int_M \widetilde{\omega} $
 does not depend on the choice of chart whose domain contains $ \mathrm{supp} (\widetilde{\omega}). $ To integrate an arbitrary compact support $\mathfrak{A}$-valued $n$-form, we can use partition of unity as the same as ordinary $n$-forms.
 \begin{lemma} \label{stok}
 For any $\Lambda\in\mathfrak{A}^{\ast}$ we have,
 \item[(i)] if $f:M\longrightarrow N$ is a smooth map, then  for each  $\widetilde\omega\in\mathcal{A}^{n}(N,\mathfrak{A})$, $\Lambda f^*(\widetilde\omega)=f^*(\Lambda\widetilde\omega)$, 
 \item[(ii)] for a compactly supported $\mathfrak{A}$-valued volume $n$-form $\widetilde\omega $ on $ M $,
 $\Lambda(\int_M \widetilde{\omega})=\int_M
 \Lambda(\widetilde{\omega})$,
 \item[(iii)] for $\widetilde\omega\in\mathcal{A}^{k}(M,\mathfrak{A})$,  $\Lambda(\mathrm{d}\widetilde\omega)=\mathrm{d}(\Lambda\widetilde\omega)$.
 \end{lemma}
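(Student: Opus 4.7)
My plan is to handle the three parts in the order (i), (iii), (ii), since the integration statement in (ii) will use (i) to pass the functional inside pullbacks on each chart.

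For (i), I would just unwind both sides pointwise. Fix $p\in M$ and tangent vectors $Y_1,\dots,Y_k\in T_pM$. From the definition of pullback and of the action of $\Lambda$ on $\A$-valued forms one gets
\[
\bigl(\Lambda f^{\ast}\widetilde\omega\bigr)_p(Y_1,\dots,Y_k)=\Lambda\!\left(\widetilde\omega_{f(p)}(df_p Y_1,\dots,df_p Y_k)\right)=\bigl(f^{\ast}\Lambda\widetilde\omega\bigr)_p(Y_1,\dots,Y_k),
\]
which is the claim; no other property of $\Lambda$ beyond linearity is needed here.

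For (iii), I would first reduce to $\A$-valued functions. Writing $\widetilde\omega=\sum_I \widetilde f_I\,dx^{i_1}\wedge\cdots\wedge dx^{i_k}$ in a chart, and noting that $\Lambda$ is $C^{\infty}(M)$-linear on forms, the problem reduces to showing $\Lambda(d\widetilde f)=d(\Lambda\widetilde f)$ for $\widetilde f\in C^{\infty}(M,\A)$. For a real vector field $X$ with $X_p$ realized by a curve $\alpha$, continuity and linearity of $\Lambda$ give
\[
X_p(\Lambda\widetilde f)=\tfrac{d}{dt}\Big|_0\Lambda(\widetilde f\circ\alpha)=\Lambda\!\left(\tfrac{d}{dt}\Big|_0\widetilde f\circ\alpha\right)=\Lambda(X_p\widetilde f),
\]
i.e.\ $(d\Lambda\widetilde f)(X)=\Lambda((d\widetilde f)(X))$, which is precisely what we need; then wedging with $dx^{i_1}\wedge\cdots\wedge dx^{i_k}$ and summing over $I$ recovers the identity for $\widetilde\omega$.

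For (ii), the Euclidean case is essentially the definition. If $\widetilde\omega=\widetilde f\,dx^1\wedge\cdots\wedge dx^n$ is compactly supported on $\R^n$, then by the definition of $\int_{\R^n}\widetilde\omega$ and the very definition of the Pettis integral,
\[
\Lambda\!\left(\int_{\R^n}\widetilde\omega\right)=\Lambda\!\left((P)\!\int_{\R^n}\widetilde f\,d\mu\right)=\int_{\R^n}\Lambda(\widetilde f)\,d\mu=\int_{\R^n}\Lambda\widetilde\omega.
\]
For a general compactly supported $\widetilde\omega$ on $M$, I would first treat the case where $\mathrm{supp}(\widetilde\omega)$ lies in a single positively oriented chart $(U,\varphi)$: combine the displayed identity on $\varphi(U)$ with part (i) applied to $\varphi^{-1}$ to get
\[
\Lambda\!\left(\int_M\widetilde\omega\right)=\Lambda\!\left((P)\!\int_{\varphi(U)}(\varphi^{-1})^{\ast}\widetilde\omega\right)=\int_{\varphi(U)}(\varphi^{-1})^{\ast}(\Lambda\widetilde\omega)=\int_M\Lambda\widetilde\omega.
\]
The general case then follows by choosing a partition of unity $\{\rho_\alpha\}$ subordinate to an atlas, writing $\int_M\widetilde\omega=\sum_\alpha\int_M\rho_\alpha\widetilde\omega$ (finite sum by compact support), and using linearity of $\Lambda$ together with the single-chart case.

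The only non-routine point is the interchange $X_p(\Lambda\widetilde f)=\Lambda(X_p\widetilde f)$ in (iii) and the passage from $(P)\!\int\Lambda=\Lambda(P)\!\int$ in (ii); both are immediate from the continuity and linearity of $\Lambda\in\A^{\ast}$, so no real obstacle is anticipated, and the lemma is essentially a bookkeeping exercise once the definitions are unwound.
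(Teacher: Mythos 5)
Your proposal is correct and follows essentially the same route as the paper: part (i) by pointwise unwinding of the definitions, part (iii) by reduction to the $k=0$ case where continuity and linearity of $\Lambda$ let it pass through the derivative, and part (ii) by combining the definition of the Pettis integral with part (i) on a single chart and then invoking a partition of unity. Your write-up merely spells out the coordinate reduction in (iii) and the chart-by-chart bookkeeping in (ii) in more detail than the paper does.
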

\begin{proof}
 To prove $ (i) $ suppose that $X_1,\cdots,X_k\in\XM$ , then
 \begin{align*}
f^*(\Lambda\tilde\omega)(X_1,\cdots,X_k)=&(\Lambda\tilde\omega)(f_{\ast}(X_1),\cdots,f_{\ast}(X_k))=\Lambda(\widetilde\omega(f_{\ast}(X_1),\cdots,f_{\ast}(X_k))\\=&\Lambda((f^{\ast}\widetilde\omega)(X_1,\cdots,X_k))=(\Lambda(f^{\ast}\widetilde\omega))(X_1,\cdots,X_k).
 \end{align*}
  
To prove $(ii)$ Suppose that $ \tilde{\omega} $  is compactly supported in the domain of a single chart $(U, \varphi)$ that is positively oriented, thus
\begin{align*}
\Lambda(\int_M \widetilde{\omega})=\Lambda(\int_{\mathbb{R}^n}(\varphi^{-1})^{\ast}\widetilde{\omega})=
\int_{\mathbb{R}^n}\Lambda((\varphi^{-1})^{\ast}\widetilde{\omega})
=
\int_{\mathbb{R}^n}(\varphi^{-1})^{\ast}(\Lambda\tilde{\omega})
=\int_M \Lambda\tilde{\omega}.
 \end{align*}
 
 The general case follows from the above result and using partition of unity.
 
 $(iii)$ Note that in the special case $k=0$, $\mathcal{A}^{0}(M,\mathfrak{A})=C^{\infty}(M,\mathfrak{A})$ and the continuity of $\Lambda$ implies $(iii)$ for the elements of $C^{\infty}(M,\mathfrak{A})$. The general case follows from this one. 
 \end{proof}
 \begin{theorem} [Stokes's Theorem]
 Let $ M $ be an oriented smooth $n$-manifold with boundary and orientations of $M$ and $\partial{M}$ are compatible, and let $\widetilde{\omega}$ be a compactly supported smooth valued $(n -1)$-form on $M$.Then 
 \[\int_M \mathrm{d}\widetilde{\omega}=\int_{\partial M} \widetilde{\omega}.\]
\end{theorem}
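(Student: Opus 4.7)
The strategy is to reduce the operator-valued Stokes identity to the classical scalar-valued Stokes theorem by testing against continuous linear functionals on $\mathfrak{A}$. Concretely, I plan to fix an arbitrary $\Lambda\in\mathfrak{A}^{\ast}$ and show that $\Lambda\bigl(\int_M d\widetilde{\omega}\bigr)=\Lambda\bigl(\int_{\partial M}\widetilde{\omega}\bigr)$; since $\mathfrak{A}^{\ast}$ separates the points of $\mathfrak{A}$ by the Hahn--Banach theorem, this equality for every $\Lambda$ forces the desired identity in $\mathfrak{A}$ itself.

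To carry out the reduction, I would apply Lemma \ref{stok}(ii) to both $\int_M d\widetilde{\omega}$ and $\int_{\partial M}\widetilde{\omega}$ in order to move $\Lambda$ inside the integral signs, and then Lemma \ref{stok}(iii) to move $\Lambda$ past the exterior derivative. This yields
\[
\Lambda\!\left(\int_M d\widetilde{\omega}\right)=\int_M \Lambda(d\widetilde{\omega})=\int_M d(\Lambda\widetilde{\omega}),\qquad \Lambda\!\left(\int_{\partial M}\widetilde{\omega}\right)=\int_{\partial M}\Lambda\widetilde{\omega}.
\]
Now $\Lambda\widetilde{\omega}$ is an ordinary compactly supported smooth complex-valued $(n-1)$-form on $M$ (the compact support is preserved because $\Lambda$ is linear and applied pointwise). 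Hence the classical Stokes theorem applies and gives $\int_M d(\Lambda\widetilde{\omega})=\int_{\partial M}\Lambda\widetilde{\omega}$, which is precisely the identity needed.

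Before invoking Lemma \ref{stok}, however, one small technical point must be addressed: the statement of the lemma is phrased for compactly supported forms on $M$, and to apply it to $\int_{\partial M}\widetilde{\omega}$ we need to regard the restriction $\iota^{\ast}\widetilde{\omega}$ (where $\iota\colon\partial M\hookrightarrow M$) as a compactly supported $(n-1)$-form on the oriented manifold $\partial M$. This is routine since the compact support of $\widetilde{\omega}$ pulls back to a compact support on $\partial M$, and Lemma \ref{stok}(i) ensures $\Lambda\iota^{\ast}\widetilde{\omega}=\iota^{\ast}(\Lambda\widetilde{\omega})$, so the scalar Stokes identity is applied to the correct form. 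This is the only place where any care is required; the rest is a direct transport of the scalar theorem through the functionals, and is the reason the preceding lemma was set up in exactly the form stated.

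Thus the main conceptual step is the separation argument via $\mathfrak{A}^{\ast}$, and the main technical obligation is verifying that the functional $\Lambda$ genuinely commutes with $d$, pullback, and Pettis integration on compactly supported forms, all of which are already established in Lemma \ref{stok}. The proof is then essentially a one-line invocation of the classical theorem combined with Hahn--Banach.
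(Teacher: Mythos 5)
Your proposal is correct and follows essentially the same route as the paper: apply Lemma \ref{stok}(ii) and (iii) to move $\Lambda$ through the Pettis integral and the exterior derivative, invoke the classical Stokes theorem on the scalar form $\Lambda\widetilde{\omega}$, and conclude by Hahn--Banach since $\mathfrak{A}^{\ast}$ separates points. Your remark about using Lemma \ref{stok}(i) to identify $\Lambda\iota^{\ast}\widetilde{\omega}$ with $\iota^{\ast}(\Lambda\widetilde{\omega})$ on $\partial M$ is a small technical point the paper leaves implicit, but it does not change the argument.
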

\begin{proof}
 By Lemma (\ref{stok}) for each $\Lambda\in\mathfrak{A}^{\ast}$ we have
\[
\Lambda(\int_M \mathrm{d}\widetilde{\omega})=\int_M \Lambda(\mathrm{d}\widetilde{\omega})
=\int_M\mathrm{d} (\Lambda\tilde{\omega})
=\int_{{\partial M}} (\Lambda\tilde{\omega})
=\Lambda(\int_{\partial M} \widetilde{\omega})
\]
Therefore, Han-Banach's theorem yields that
$\int_M \mathrm{d}\widetilde{\omega}=\int_{\partial M} \widetilde{\omega}$
\end{proof}
\section{\bf Connection and Curvature}
The notion of covariant derivation of $ \mathfrak{A} $-vector fields can be defined as follows.
\begin{definition}
An $ \mathfrak{A} $-connection $ \nabla $ on $ M $ is a bilinear map 
 \[
\nabla:\XA\times\XA\longrightarrow\XA
\]
such that for all $\tilde{X}, \tilde{Y}\in\XA$ and any $f\in C^{\infty}(M,\A)$, 
\begin{enumerate}
\item[$(i)$] $\nabla_{f\tilde{X}}\tilde{Y}=f\nabla_{\tilde{X}} \tilde{Y}$
\item[$(ii)$] $\nabla_{\tilde{X}}f\tilde{Y}=\tilde{X}(f)\tilde{Y}+f\nabla_{\tilde{X}}\tilde{Y}$. 
\end{enumerate}
\end{definition}
Every ordinary connection on $M$ can be extended uniquely to an $ \mathfrak{A} $-connection on $M$. For an ordinary connection on $M$ such as $\nabla$, its extension as an $\A$-connection is defined as follows. If $X,Y\in\XM$ and $h,k\in C^\infty (M,\A)$, then define
\[
\nabla_{X\otimes h} (Y\otimes k)=h((\nabla_XY)\otimes k+Y\otimes X(k))
\]

The torsion tensor of a $\A$-connection $\nabla $ is defined by
 \[
 T(\tilde{X},\tilde{Y})= \nabla_{\tilde{X}}\tilde{Y}-\nabla_{\tilde{Y}}\tilde{X}-[\tilde{X},\tilde{Y}] \qquad \tilde{X}, \tilde{Y}\in \XA 
 \]
If $ T=0 $, then $ \nabla $ is called torsion-free.
\begin{definition}
Let $ \nabla $ be an $ \mathfrak{A} $-connection on $M$. The function 
\\$\mathcal{R}:\XA\times\XA\times\XA\longrightarrow\XA$ given by 
\[ 
\mathcal{R}(\TX,\TY)(\tilde{Z})=\nabla_{\TX}\nabla_{\TY}\tilde{Z}-\nabla_{\TY}\nabla_{\TX}\tilde{Z}-\nabla_{\left[\TX,\TY\right]}\tilde{Z} 
\]
is a $ (1,3) \ \mathfrak{A} $-tensor on $ M $ and is called the curvature tensor of $ \nabla $.
\end{definition}
\begin{proposition}[The Bianchi identities] \label{bian1}
If $ \mathcal{R} $ is the curvature of a torsion free $ \mathfrak{A} $-connection $ \nabla $ on $M$, then for all $\TX, \TY, \tilde{Z}\in\XA$,
\begin{enumerate}
\item[$(i)$] $\mathcal{R}(\TX,\TY)\tilde{Z}+\mathcal{R}(\TY,\tilde{Z})\TX+\mathcal{R}(\tilde{Z},\TX)\TY=0$;
\item[$(ii)$] $(\nabla_{\TX}\mathcal{R})(\TY,\tilde{Z})+(\nabla_{\TY}\mathcal{R})(\tilde{Z},\TX)+(\nabla_{\tilde{Z}}\mathcal{R})(\TX,\TY)=0.$
\end{enumerate}
\end{proposition}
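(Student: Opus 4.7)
The plan is to port the standard Riemannian-geometry proofs of both Bianchi identities into the $\A$-valued setting. The only ingredients they use are the defining formula for $\mathcal{R}$, the torsion-free condition $\nabla_{\TX}\TY-\nabla_{\TY}\TX=[\TX,\TY]$, and the Jacobi identity for the $\A$-Lie bracket on $\XA$. The Jacobi identity is part of the ``main properties'' of the bracket declared routine at the end of Section~2, so I will take it as available; commutativity of $\A$ ensures that no ordering subtleties appear in any of the algebraic manipulations.

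For (i), I would expand each of $\mathcal{R}(\TX,\TY)\TZ$, $\mathcal{R}(\TY,\TZ)\TX$, $\mathcal{R}(\TZ,\TX)\TY$ by definition and collect the nine resulting summands. The six pure double-derivative terms regroup as $\nabla_{\TX}(\nabla_{\TY}\TZ-\nabla_{\TZ}\TY)+\text{cyc}$; applying torsion-freeness converts each parenthesis into a Lie bracket, giving $\nabla_{\TX}[\TY,\TZ]$ and its cyclic partners. Pairing these with the remaining three terms $-\nabla_{[\TX,\TY]}\TZ-\nabla_{[\TY,\TZ]}\TX-\nabla_{[\TZ,\TX]}\TY$ and a second invocation of torsion-freeness turns each pair $\nabla_{\TX}[\TY,\TZ]-\nabla_{[\TY,\TZ]}\TX$ into the iterated bracket $[\TX,[\TY,\TZ]]$. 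The cyclic sum of iterated brackets vanishes by Jacobi, finishing (i).

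For (ii), I would test against an arbitrary $\tilde{W}\in\XA$ and expand each $(\nabla_{\TX}\mathcal{R})(\TY,\TZ)\tilde{W}$ via the product rule for the covariant derivative of a $(1,3)$-tensor, yielding four summands per term and twelve terms after cyclic summation. I would separate these into triple-derivative contributions of the form $\nabla_{\TX}\bigl(\mathcal{R}(\TY,\TZ)\tilde{W}\bigr)$ and $\mathcal{R}(\TY,\TZ)\nabla_{\TX}\tilde{W}$, whose difference across cyclic partners produces brackets of $\mathcal{R}$ with $\nabla$, and mixed terms $\mathcal{R}(\nabla_{\TX}\TY,\TZ)\tilde{W}$, $\mathcal{R}(\TY,\nabla_{\TX}\TZ)\tilde{W}$ with their cyclic images. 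Using antisymmetry of $\mathcal{R}$ in its first two slots together with the torsion-free identity, each cyclic pair of mixed terms collapses to $\mathcal{R}([\TX,\TY],\TZ)\tilde{W}+\text{cyc}$, which cancels precisely the bracket contributions from the triple-derivative side; a final appeal to Jacobi accounts for the remainder.

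The main friction I anticipate is purely bookkeeping: keeping signs, the order of the three arguments, and the cyclic permutations straight across roughly a dozen terms. Nothing about the $C^{\star}$-algebraic framework beyond commutativity enters the calculation, so once the classical proof is rewritten with every vector field replaced by its $\A$-valued tilded analogue and every product understood in $C^{\infty}(M,\A)$, each identification goes through verbatim.
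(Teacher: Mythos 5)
Your proposal is correct and follows exactly the route the paper itself takes: the paper's proof consists of the single remark that the argument is the same as for ordinary connections, and your sketch is precisely that classical argument (definition of $\mathcal{R}$, torsion-freeness used twice, Jacobi identity for the extended bracket) transplanted to the $\A$-valued setting. You rightly flag that the Jacobi identity for the $\A$-Lie bracket is the one imported ingredient, which the paper covers under the ``main properties of the Lie bracket are routine'' remark in Section~2.
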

\begin{proof}
The proof is similar to ordinary connections(cf. \cite{Poor}).
\end{proof}

If $\nabla^1$, $\nabla^2$ are $ \mathfrak{A} $-connections on $M$, then, the operator $T=\nabla^1-\nabla^2$ is a $(1,2)$ $\mathfrak{A}$-tensor and the space of $\mathfrak{A}$-connections on $ M $ is an affine space which is modeled on the space of $(1,2)$ $\mathfrak{A}$-tensors.
\section{\bf Operator-Valued Inner Product }
In this section, we introduce the notion $\A$-valued inner products on $\A$-modules, and investigate some of their basic properties . For the case of Hilbert $C^\star$-modules refer to (\cite{Lan}, \cite{Lans}, \cite{Pasch}). For an  $\A$-module $V$, denote the collection of all $\mathfrak{A}$-linear mappings from $V$ into $\mathfrak{A}$ by $V^\sharp$. This space is also an $\A$-module. If $V$ is a free module, then $V^\sharp$ is also a free module.
\begin{definition}
Let $V$ be a $\A$-module. A mapping $\langle \cdot,\cdot\rangle:V\times V\longrightarrow\A$  is called an $\mathfrak{A}$-valued inner product  on $V$ if for all   $a\in\mathfrak{A}$, and $x,y,z\in V$  the following conditions hold: 
\begin{itemize}
\item[$(i)$] $\langle y,x\rangle =\langle x,y\rangle^* $
\item[$(ii)$] $\langle a x+y, z\rangle =a\langle x,z\rangle+\langle y,z\rangle$
\item[$(iii)$]  $\forall\: w\in V \langle w,x\rangle =0\Rightarrow x=0$ 
\item[$(iv)$]  $\forall\: T\in V^\sharp \:\exists\: x\in V\:\forall\: y\in V\: (T(y)=\langle y,x\rangle)$
\end{itemize}
\end{definition}
Conditions $(iii)$ and $(iv)$ are called non-degeneracy of the inner product. Note that for each $x\in V$ the mapping $\hat{x}:V\longrightarrow \mathfrak{A}$, defined by $\hat{x}(y)=\langle y,x\rangle$ belongs to $V^\sharp$. One can see that the mapping $x\longmapsto\hat{x}$ from $V$ into $V^\sharp$ is conjugate $\mathfrak{A}$-linear and non-degenracy of the inner product is equivalent to the bijectivity of this map.
 \begin{theorem} \label{ndeg}
Let $V$ be a finite dimensional free $\A$-module and $\langle \cdot,\cdot\rangle:V\times V\longrightarrow\A$ satisfies $(i)$ and $(ii)$. If $\langle \cdot,\cdot\rangle$ is non-degenerate then for any basis $\{ e_i\}$ of $V$, $\det(\langle e_i,e_j\rangle )$ is invertible in $\A$. Conversely, If for some basis $\{ e_i\}$ of $V$, $\det(\langle e_i,e_j\rangle )$ is invertible, then $\langle \cdot,\cdot\rangle$ is non-degenerate.
\end{theorem}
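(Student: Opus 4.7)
My plan is to reduce both implications to a single chain of equivalences linking non-degeneracy, invertibility of the Gram matrix, and invertibility of its determinant, exploiting commutativity of $\A$. The paper has already observed that non-degeneracy of $\langle\cdot,\cdot\rangle$ is equivalent to bijectivity of the canonical conjugate-$\A$-linear map $\Phi : V \to V^\sharp$, $x \mapsto \hat x$. So I would fix a basis $\{e_i\}_{i=1}^n$ of $V$ together with its dual basis $\{e^i\}$ of $V^\sharp$ (characterized by $e^i(e_j) = \delta_{ij}$), identifying both $V$ and $V^\sharp$ with $\A^n$ via $x = \sum_j c_j e_j \leftrightarrow (c_j)$ and $T \leftrightarrow (T(e_i))$, and set $G = (\langle e_i, e_j\rangle) \in M_n(\A)$.

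The key computation is to write $\Phi$ in these coordinates. Using $(i)$ and $(ii)$ one quickly derives $\langle y, ax\rangle = \langle y, x\rangle a^*$, so for $x = \sum_j c_j e_j$,
\[
\hat x(e_i) = \langle e_i, x\rangle = \sum_j G_{ij}\, c_j^*.
\]
Thus $\Phi$ factors as $c \mapsto c^* \mapsto G c^*$, namely componentwise involution on $\A^n$ followed by left multiplication $L_G$. Since componentwise involution is a set-theoretic bijection of $\A^n$, bijectivity of $\Phi$ is equivalent to bijectivity of $L_G$.

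To close, I would convert bijectivity of $L_G$ to invertibility of $G$ in $M_n(\A)$ and then apply the determinantal criterion. If $L_G$ is bijective, its $\A$-linear inverse is left multiplication by some $H \in M_n(\A)$, whence $GH = HG = I$ and $\det G \cdot \det H = 1$, so $\det G$ is a unit. Conversely, commutativity of $\A$ makes the adjugate identity $G \cdot \operatorname{adj}(G) = \det(G)\, I$ hold in $M_n(\A)$, so if $\det G$ is a unit then $G^{-1} = (\det G)^{-1} \operatorname{adj}(G)$ exists and $L_G$ is a bijection. Chaining these equivalences yields both directions: non-degeneracy forces $\det(\langle e_i, e_j\rangle)$ to be invertible for any basis, and invertibility of $\det(\langle e_i, e_j\rangle)$ for some basis forces non-degeneracy.

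The only point demanding care is correct bookkeeping of the conjugate-$\A$-linearity of $\hat{\cdot}$ in its second slot; once $\Phi$ is factored through componentwise involution, the remainder is classical linear algebra over a commutative unital ring. Commutativity of $\A$ is used exactly here: it is what makes $\det$ well-defined on $M_n(\A)$ and validates the adjugate formula. Without it, neither step of the final chain would survive.
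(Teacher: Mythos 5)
Your proof is correct and follows essentially the same route as the paper's: both reduce non-degeneracy to invertibility of the Gram matrix $G=(\langle e_i,e_j\rangle)$ via the coordinate expression of $x\mapsto\hat{x}$, and then invoke the determinant--adjugate criterion for invertibility over the commutative ring $\A$. The paper merely makes your two bijections explicit, constructing the reciprocal basis $e^i=g^{ij}e_j$ for one direction and the dual vectors $u^i$ (in effect, your matrix $H$) for the other.
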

\begin{proof}
First, assume that for some basis $\{ e_i\}$, $\det(\langle e_i,e_j\rangle )$ is invertible in $\A$. Set $g_{ij}=\langle e_i,e_j\rangle$, so $(g_{ij})$ is an invertible $\A$-valued matrix. Denote the inversion of this matrix by $(g^{ij})$. Set $e^i=g^{ij}e_j$; $\{ e^i\}$ is also a basis and is called the reciprocal base of $\{ e_i\}$. The characteristic property of the reciprocal base is that $\langle e_i,e^j\rangle =\delta_i^j$, so for every $x\in V$, if $x=\lambda^i e_i$, then $\lambda^i=\langle x,e^i\rangle$. To prove non-degenracy, suppose for all $w\in V$ we have $\langle w,x\rangle =0$. So, for all index $i$ we have $\langle x,e^i\rangle =0$ that implies $\lambda^i=0$, so $x=0$. For any $T\in V^\sharp$, set $\lambda_i=T(e_i)$ so, for $x=\bar\lambda_ie^i$ we have $\hat x=T$.

 Conversely, suppose that $\langle \cdot,\cdot\rangle$ is non-degenerate.  Let $\{ e_i\}$ be an arbitrary basis and set $g_{ij}=\langle e_i,e_j\rangle$. For each $i$, consider the $\mathfrak{A}$-linear map $T^i:V\longrightarrow\A$ defined by $T^i(e_j)=\delta^i{_j}$. By non-degenarcy, the map $x\longmapsto\hat{x}$ is bijective and there exists $u^i\in V$ such that $\widehat{u^i}=T^i$, so $\langle e_j,u^i\rangle =\delta^i_j$. Any $u^i$ can be written as  $u^i=a^{ij}e_j$ for some $a_{ij}\in\mathfrak{A}$. An easy computation shows that the matrix $A=(a^{ij})$ is the inverse of  the matrix $ D=\left(  g_{ij}\right) $, In fact, 
\[
\delta^i_j=\langle e_j ,u^i \rangle =\langle u^i , e_j \rangle = \langle a^{ik} e_k , e_j \rangle=a^{ik}\langle e_k ,e_j\rangle =a^{ik}g_{kj}
\]
Therefore $AD=I$, so $\det(A)\det(D)=1$ and $\det(D)$ is invertible. 
\end{proof}
 Let $V$ be an $\A$ module. $V$, may have an involution that is a conjugate $\A$-linear isomorphism $*:V\longrightarrow V$ such that $*^2=1_V$. An inner product $\langle \cdot,\cdot\rangle$ on $V$ is called compatible with the involution if for all $x,y\in V$
 \[
 \langle x^*,y^*\rangle=\langle x,y\rangle^*
 \]
 In these spaces, if $x$ and $y$ are self-conjugate elements of $V$, then $\langle x,y\rangle$ is self-adjoint.
 \begin{example}
 Let $W$ be a finite dimensional real vector space and\\ $\langle \cdot,\cdot\rangle:W\times W\longrightarrow\A_{\R}$ be a symmetric bilinear map such that for some basis $\{e_i\}$ of $W$, $det(\langle e_i,e_j\rangle )$ is invertible in $\A$. The tensor product $W^\A=W\otimes_{\R}\A$ is a free $\mathfrak{A}$-module and  $\langle \cdot,\cdot\rangle$ can be extended uniquely to a $\mathfrak{A}$-valued inner product on it as follows. For all  $u,v\in V$ and $a,b\in\mathfrak{A}$ set
 \[
\langle u\otimes a\ ,v\otimes b\rangle =a\,b^{\ast}\,\langle u\ ,v\rangle .
\]
The $\A$-module $W^\A$ has a natural involution that is $(u\otimes a)^*=u\otimes a^*$ and the extended inner product on $W^\A$ is compatible with this involution. In fact, any inner product on $W^\A$ which is compatible with this involution is obtained by the above method.
 \end{example}
 
 For an inner product on a free finite dimensional $\A$-modules, we define an element of $\A$ as its signature. In the ordinary cases, signature is a scalar that is $\pm 1$ and is defined by orthonormal bases. Here, we must define signature by arbitrary bases. 
 \begin{theorem}
 Let $V$ be a free finite dimensional $\A$-module and $\langle \cdot,\cdot\rangle$ be an inner product on it. If  $\{ e_i\}$ is a basis on $V$,  set $g_{ij}=\langle e_i,e_j\rangle$ and $g={\rm det}(g_{ij})$. Then, $g$ is self-adjoint and ${{\textstyle \vert g\vert}\over{\textstyle g}}$ dose not depend on the choice of the base.
 \end{theorem}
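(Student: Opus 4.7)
My plan has two parts: showing $g$ is self-adjoint, and tracking how $g$ transforms under change of basis.

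For the first claim, I would start from property $(i)$ of the inner product, which gives $g_{ji} = g_{ij}^{\ast}$, so the matrix $G = (g_{ij})$ is Hermitian in the sense that $G^{T}$ equals the entrywise conjugate $\bar{G}$. Since $\mathfrak{A}$ is commutative, the determinant is a polynomial in the entries that can be reordered freely, so applying $\ast$ to $\det(G)$ (and using $(ab)^{\ast} = b^{\ast}a^{\ast} = a^{\ast}b^{\ast}$) yields $\det(G)^{\ast} = \det(\bar{G})$; combining this with the identity $\det(G) = \det(G^{T})$ gives $g^{\ast} = \det(G^{T}) = \det(\bar{G}) = g$.

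For the second claim, let $\{f_i\}$ be another basis, related to $\{e_i\}$ by an invertible matrix $C = (C_i^{\,j}) \in M_n(\mathfrak{A})$ via $f_i = C_i^{\,j} e_j$. Then $\mathfrak{A}$-bilinearity together with property $(i)$ gives
\[
\langle f_i, f_j\rangle \;=\; C_i^{\,k}\, g_{k\ell}\, (C_j^{\,\ell})^{\ast},
\]
i.e., $G' = C G C^{\ast}$. Taking determinants, and using that $\mathfrak{A}$ is commutative so $\det(C^{\ast}) = \det(C)^{\ast}$, I get $g' = \alpha\,g\,\alpha^{\ast}$ where $\alpha = \det(C)$. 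Writing $p = \alpha\alpha^{\ast} = \alpha^{\ast}\alpha$, this element is positive in $\mathfrak{A}$; moreover $\alpha$ is invertible (as $C$ is invertible), so $p$ is invertible and strictly positive in the relevant sense.

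The final step is to compute $|g'|/g' = |pg|/(pg)$. Since $p$ and $g$ are both self-adjoint, the identity $|AB| = |A|\,|B|$ for self-adjoint elements (recalled at the end of Section~1) gives $|pg| = |p|\,|g| = p\,|g|$, using that $|p| = p$ because $p$ is positive. Then, by commutativity of $\mathfrak{A}$ and invertibility of $p$ and $g$,
\[
\frac{|g'|}{g'} \;=\; \frac{p\,|g|}{p\,g} \;=\; \frac{|g|}{g},
\]
which is the desired invariance. The only mildly delicate point is justifying $|pg| = |p||g|$ and the positivity/invertibility of $p$; everything else is bookkeeping with the change-of-basis matrix and the commutativity of $\mathfrak{A}$.
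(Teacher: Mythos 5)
Your proposal is correct and follows essentially the same route as the paper: self-adjointness of $g$ via $^tG=G^*$ and commutativity of $\det$, then the transformation law $g'=\det(C)\,g\,\det(C)^*$ and the observation that $\det(C)\det(C)^*$ is positive, so it passes through $\vert\cdot\vert$ and cancels in the quotient. The only cosmetic difference is that you invoke the identity $\vert AB\vert=\vert A\vert\vert B\vert$ explicitly, whereas the paper simply notes $\vert g'\vert=\vert g\vert\det(A)\det(A)^*$ directly from positivity; both are the same justification.
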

\begin{proof}
Put $G=(g_{ij})$, since $g_{ij}^*=g_{ji}$ we find $^tG=G^*$, therefore 
\[
g={\rm det}(G)={\rm det}(^tG)={\rm det}(G^*)={\rm det}(G)^*=g^*
\]
Now, suppose that $\{e'_i\}$ is another basis. Set $g'_{ij}=\langle e'_i,e'_j\rangle$ and $G'=(g'_{ij})$ and $g'={\rm det}(G')$. For some matrix $A=(a_i^j)$ we have $e'_i=a_i^je_j$, so
\[
g'_{ij}=\langle e'_i,e'_j\rangle=\langle a_i^ke_k,a_j^le_l\rangle=a_i^ka_j^{l*}\langle e_k,e_l\rangle=a_i^ka_j^{l*}g_{kl}
\]
This equality implies that $G'=AG(^t\!A^*)$, so $g'=g\ {\rm det}(A){\rm det}(A)^*$. Since ${\rm det}(A){\rm det}(A)^*$ is a positive self-adjoint element of $\A$ we deduced that $\vert g'\vert=\vert g\vert {\rm det}(A){\rm det}(A)^*$. Consequently,
\[
{\vert g'\vert \over g'}={\vert g\vert {\rm det}(A){\rm det}(A)^* \over g\ {\rm det}(A){\rm det}(A)^*}={\vert g\vert \over g}
\]
\end{proof}
The value $\nu={{\textstyle \vert g\vert}\over{\textstyle g}}$ which does not depend on the choice of the base, is called the signature of the inner product. Note that  $\nu^{-1}=\nu$ and $\sigma(\nu)\subseteq\{-1, 1\}$. In the ordinary metrics, $\nu$ is exactly $(-1)^q$ where $q$ is the index of the inner product.

Note that in free finite dimensional $\A$ modules such as $V$ that has an $\A$ inner product, $V$ and $V^\sharp$ are naturally isomorphic and this isomorphism induces an inner product on $V^\sharp$. So, all results about $V$, can be considered for $V^\sharp$ too.
\section{\bf The Hodge Star Operator}
To define the Hodge star operator, first we need the notion of orientation of free modules. Let $V$ be a free finite dimensional $\A$-module. By definition two ordered bases of $V$ have same orientation if determinant of the transition matrix between them is a positive element of $\A$. This is an equivalence relation between the bases of $V$, but there exist many equivalence classes. We choose one of these classes as an orientation for $V$ and we call it an orientation for $V$. In this case, we call $V$ an orientated space and every basis in the orientation, is called a proper base of $V$. Note that there are many orientations on $V$ and it is not appropriate to call some of them positive.
\begin{definition}
Let $V$ be an oriented free $n$-dimensional $\A$-module that has an $\A$-inner product. For each proper base  $\{  e_i\}$ with reciprocal base $\{  e^i\}$, and $g=\mathrm{det}(g_{ij})$, set 
$\Omega=\sqrt{\left\vert g\right\vert }\,e^1\wedge\cdots\wedge e^n$. This tensor is called the canonical volume form of the inner product. 
\end{definition}
\begin{theorem}
The canonical volume form does not depend on the choice of the proper base.
\end{theorem}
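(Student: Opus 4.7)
The plan is to track how $\Omega$ transforms under a change of proper basis and verify that the various factors cancel. Let $\{e_i\}$ and $\{e'_i\}$ be two proper bases related by $e'_i = a_i^j e_j$, and set $A = (a_i^j)$. Because both bases represent the chosen orientation, by definition $\det(A)$ is a positive element of $\A$; in particular $\det(A)^\ast = \det(A)$ and $\sqrt{\det(A)^2}=\det(A)$.

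First I would compute how the reciprocal bases transform. Writing $e'^{j} = c^j_k\,e^k$ and imposing $\langle e'_i, e'^{j}\rangle = \delta^j_i$, conjugate $\A$-linearity of the inner product in its second slot (a consequence of axioms $(i)$ and $(ii)$) yields $a_i^k (c^j_k)^\ast = \delta_i^j$. In matrix form this identifies the coefficient matrix $(c^j_k)$ with the inverse of the conjugate transpose $^tA^\ast$ of $A$, so in the top wedge power
\[
e'^{1}\wedge\cdots\wedge e'^{n} \;=\; \frac{1}{\det(A)^{\ast}}\;e^1\wedge\cdots\wedge e^n.
\]

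Next I would use the formula $g' = g\,\det(A)\det(A)^\ast$ already produced in the proof of the preceding theorem. Positivity of $\det(A)$ collapses this to $g' = g\,\det(A)^2$, and then the identity $|xy| = |x|\,|y|$ for self-adjoint elements (recorded in Section 1) gives $|g'| = |g|\,\det(A)^2$. Taking the unique positive square root and again using that $\det(A)$ is positive,
\[
\sqrt{|g'|} \;=\; \sqrt{|g|}\,\det(A).
\]
Multiplying the two displays produces
\[
\Omega' \;=\; \sqrt{|g'|}\,e'^{1}\wedge\cdots\wedge e'^{n} \;=\; \sqrt{|g|}\,\det(A)\cdot\det(A)^{-1}\,e^1\wedge\cdots\wedge e^n \;=\; \Omega.
\]

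The only genuine obstacle I foresee is bookkeeping: the conjugate $\A$-linearity in the second argument makes the reciprocal-basis transition matrix involve $^tA^{\ast}$ rather than $A$, which by itself would leave a stray $\det(A)^{\ast}$ factor that is a priori different from $\det(A)$. The argument succeeds precisely because the orientation hypothesis forces $\det(A)$ to be positive, so $\det(A)^\ast=\det(A)$ and the two factors cancel. Commutativity of $\A$ is used silently to pull scalar factors through wedge products and through the determinant.
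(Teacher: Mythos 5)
Your proof is correct and follows essentially the same strategy as the paper's: both rest on the identity $g' = g\,\det(A)\det(A)^{\ast}$ from the preceding theorem together with positivity of $\det(A)$ to make the determinant factors cancel against $\sqrt{|g'|}=\det(A)\sqrt{|g|}$. The only divergence is in how the top exterior power of the reciprocal basis is transformed: you compute the transition matrix $({}^{t}\!A^{\ast})^{-1}$ directly from the relation $\langle e'_i,{e'}^{j}\rangle=\delta_i^{j}$, whereas the paper reaches the same factor by passing to the original basis and back using the Gram matrices $(g_{ij})$ and $(u_{ij})$; your route is slightly more direct and makes explicit why the a priori stray factor $\det(A)^{\ast}$ is harmless.
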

\begin{proof}
Assume that $\{ u_i\}$ is another proper base with reciprocal base $\{ u^j\}$. For some matrix $A=(a_i^j)$, we have $u_i=a_i^j\,e_j$. The same orientation of $\{ e_i\}$ and $\{ u_i\}$ implies that  $\det A$ is positive. If  $A^{-1}=(\beta_i^j)$, then $e_i=\beta_i^j\,u_j$. According to the proof of Theorem (5.4)  if
\[
 g_{ij}=\langle e_i,e_j \rangle,\, u_{ij}=\langle u_i,v_j\rangle,\, g=\mathrm{det}(g_{ij}),\,  u=\mathrm{det}(u_{ij}),
 \]
 then, $u=\det A\det A^*g=(\mathrm{det}A)^2\,g$. Positivity of $\det A$  yields $\sqrt{\left\vert u\right\vert}=\mathrm{det}A.\,\sqrt{\left\vert g\right\vert}$, henceforth 
\begin{align*}
\sqrt{\left\vert g\right\vert}\;e^1\wedge\cdots\wedge e^n&=\frac{\sqrt{\left\vert g\right\vert}}{g}\,e_1\wedge\cdots\wedge e_n=
\frac{\sqrt{\left\vert g\right\vert}}{g}\;(\beta_1^{i_1}\;u_{i_1})\wedge\cdots\wedge(\beta_n^{i_n}\;u_{i_n})\\&=
\frac{\sqrt{\left\vert g\right\vert}}{g}\;\mathrm{det}A^{-1}\;u_1\wedge\cdots\wedge u_n\\&=
\frac{\sqrt{\left\vert g\right\vert}}{g\;\mathrm{det}A}\;(u_{1i_1}\;u^{i_1})\wedge\cdots\wedge(u_{ni_n}\;u^{i_n})\\&=
\frac{\sqrt{\left\vert g\right\vert}\;u}{g\;\mathrm{det}A}\;u^1\wedge\cdots\wedge u^n\\&=\frac{\sqrt{\left\vert g\right\vert}\;u\,\mathrm{det}A}{g\;(\mathrm{det}A)^2}\;u^1\wedge\cdots\wedge u^n
=\sqrt{\left\vert u\right\vert}\;u^1\wedge\cdots\wedge u^n.
\end{align*}
\end{proof}

Any $\mathfrak{A}$-inner product on the free $n$-dimensional $\A$-module $V$, in a natural way, can be extended to the space of each exterior powers $\Lambda^kV$. For 
$\alpha=u_1\wedge\cdots\wedge u_k$ and $\beta=v_1\wedge\cdots\wedge v_k$,  set $\langle\alpha,\beta\rangle=\mathrm{det}(\langle u_i,v_j\rangle)$. If $\{  e_i\}$ is a base of $V$ with reciprocal base  $\{  e^j\}$ , then it is straightforward to check that $\{ e_{i_1}\wedge\cdots\wedge e_{i_k}\}_{1\leq i_1<...<i_k\leq n}$  is a basis of $\Lambda^kV$ and its reciprocal base is $\{ e^{i_1}\wedge\cdots\wedge e^{i_k}\}_{1\leq i_1<...<i_k\leq n}$. 
\begin{remark}
Note that $\langle\Omega,\Omega\rangle=\nu$.
\end{remark}
We now define the operation $\star$, called the Hodge star operator which is similar to the Hodge star operator for ordinary metrics. This is a conjugate linear map of $\Lambda^kV$ onto $\Lambda^{n-k}V$. The operator depends on the inner product and also on the orientation of $V$. For each $\beta\in\Lambda^kV$, $\mu_\beta\in(\Lambda^{n-k}V)^{\sharp}$ is defined by
 \[
 \mu_{\beta}:\Lambda^{n-k}V\longrightarrow\mathfrak{A},\qquad \alpha\mapsto\nu\,\langle\beta\wedge\alpha,\Omega\rangle.
 \]
Because of the non-degeneracy of the inner product on $\Lambda^{n-k}V$, it follows that there is a unique $\star\beta\in\Lambda^{n-k}V$ such that $\mu_{\beta}(\alpha)=\langle\alpha,\star\beta\rangle$, that is; $\langle\alpha,\star\beta\rangle=\nu\,\langle\beta\wedge\alpha,\Omega\rangle$.
So $\star:\Lambda^kV\longrightarrow\Lambda^{n-k}V$ is an operator that for all $\beta\in\Lambda^kV$ and $\alpha\in\Lambda^{n-k}V$ we have \[
\langle\alpha,\star\beta\rangle=\nu\,\langle\beta\wedge\alpha,\Omega\rangle.
\]

The above equation shows that $\star$ is a conjugate linear map, that is i.e.
\[
\star(\beta_1+\beta_2)=\star(\beta_1)+\star(\beta_2), \qquad \star(a\beta)=a^{\ast}\star\beta.
\]
Since  $\beta\wedge\alpha$ is  a multiple of $\Omega$, and the coefficient is $\nu\,\langle\beta\wedge\alpha,\Omega\rangle$, we find that 
\[
 \forall\,\beta\in\Lambda^{k}V,\ \forall\,\alpha\in\Lambda^{n-k}V,\qquad\beta\wedge\alpha=\langle\alpha,\star\beta\rangle\Omega.
\]

We now summarize the properties of the operator $\star$ in the following theorem. 
\begin{theorem}
If $V$ is an oriented free $n$-dimensional $\A$-module that has an $\A$-valued inner product with signature $\nu$, and $\{  e_i\}$ is a proper base of $V$,  then the operator $\star$ satisfies
\begin{itemize} 
\item[(i)] $\star(e^{\sigma(1)}\wedge\cdots\wedge e^{\sigma(k)})=\mathrm{sgn}(\sigma)\,\frac{1}{\sqrt{\left\vert \textstyle{g}\right\vert}}\,e_{\sigma(k+1)}\wedge\cdots\wedge e_{\sigma(n)}$; $(\sigma\in S_n)$
\item[(ii)] $\star(e_{\sigma(1)}\wedge\cdots\wedge e_{\sigma(k)})=\mathrm{sgn}(\sigma)\,\frac{\textstyle{g}}{\sqrt{\left\vert \textstyle{g}\right\vert}}\,e^{\sigma(k+1)}\wedge\cdots\wedge e^{\sigma(n)}$; $(\sigma\in S_n)$ 
\item[(iii)] $\alpha\wedge\star\beta=\nu\,\langle\alpha,\beta\rangle\,\Omega$\quad
 $ \alpha, \beta\in\Lambda^kV$; 
\item[(iv)] $\star(1)=\nu\,\Omega $;
\item[(v)] $\star(\Omega )=1$;
\item[(vi)]$\star\star(\alpha)=(-1)^{k(n-k)}\nu\,\alpha$, \quad $\alpha\in\Lambda^kV$;
\item[(vii)]$\langle\star\alpha,\star\beta\rangle=\nu\,\langle\alpha,\beta\rangle^{\ast}$, \quad $\alpha,\beta\in\Lambda^kV.$ 
\end{itemize}
\end{theorem}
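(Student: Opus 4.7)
The plan is to verify (i) directly from the defining relation $\langle\alpha,\star\beta\rangle=\nu\langle\beta\wedge\alpha,\Omega\rangle$, then obtain (ii) by the same pattern, extract (iv) and (v) as special cases, compute (vi) by iterating (i) and (ii), and finally derive (iii) and (vii) from the already established identity $\gamma\wedge\delta=\langle\delta,\star\gamma\rangle\Omega$ together with (vi). Commutativity of $\A$ and self-adjointness of $g$, $|g|$, $\sqrt{|g|}$, $\nu$ will let us reposition scalars freely throughout.

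For (i), since both sides of the defining relation are $\A$-linear in $\alpha$, it suffices to test against the basis $\{e^{j_1}\wedge\cdots\wedge e^{j_{n-k}}\}$ of $\Lambda^{n-k}V$. With $\beta=e^{\sigma(1)}\wedge\cdots\wedge e^{\sigma(k)}$, the wedge $\beta\wedge e^{j_1}\wedge\cdots\wedge e^{j_{n-k}}$ vanishes unless $\{j_1,\ldots,j_{n-k}\}=\{\sigma(k+1),\ldots,\sigma(n)\}$, in which case it equals $\mathrm{sgn}(\sigma)\,e^1\wedge\cdots\wedge e^n=(\mathrm{sgn}(\sigma)/\sqrt{|g|})\,\Omega$, so the right-hand side collapses to $\mathrm{sgn}(\sigma)/\sqrt{|g|}$ after applying $\nu^2=1$. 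On the other side, $\langle e^{j_1}\wedge\cdots\wedge e^{j_{n-k}},e_{\sigma(k+1)}\wedge\cdots\wedge e_{\sigma(n)}\rangle$ is a determinant of Kronecker deltas, equal to $1$ precisely when the index sets match, confirming the candidate. Formula (ii) is treated identically, the new input being $e_i=g_{ij}e^j$, which gives $e_1\wedge\cdots\wedge e_n=g\,e^1\wedge\cdots\wedge e^n=(g/\sqrt{|g|})\Omega$ in the one non-vanishing pairing. Then (iv) is the $k=0$ case of (ii) and (v) is the $k=n$ case of (i), with the self-adjoint factor $\sqrt{|g|}$ extracted from $\Omega$ by conjugate $\A$-linearity of $\star$.

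For (vi), apply (ii) to the output of (i). The self-adjoint scalar $\mathrm{sgn}(\sigma)/\sqrt{|g|}$ passes through $\star$ unchanged. To invoke (ii) on $e_{\sigma(k+1)}\wedge\cdots\wedge e_{\sigma(n)}$, re-express these as the leading block of a permutation $\tau=\sigma\circ\rho$ where $\rho$ is the cyclic shift moving the last $n-k$ positions to the first $n-k$, contributing $\mathrm{sgn}(\rho)=(-1)^{k(n-k)}$; then (ii) produces the tail $e^{\sigma(1)}\wedge\cdots\wedge e^{\sigma(k)}$ with coefficient $\mathrm{sgn}(\tau)\,g/\sqrt{|g|}$. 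The numerical factors combine as $(1/\sqrt{|g|})(g/\sqrt{|g|})=g/|g|=\nu$ and the signs give $(-1)^{k(n-k)}$, yielding $\star\star\alpha=(-1)^{k(n-k)}\nu\alpha$ on basis elements. Since $\star\star$ is $\A$-linear (composite of two conjugate-$\A$-linear maps), the identity extends to all of $\Lambda^kV$.

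For (iii), apply $\gamma\wedge\delta=\langle\delta,\star\gamma\rangle\Omega$ with $\gamma=\star\beta$ and $\delta=\alpha$, substitute (vi), and pull the self-adjoint scalar $(-1)^{k(n-k)}\nu$ out of the second slot of $\langle\alpha,\cdot\rangle$ to get $\star\beta\wedge\alpha=(-1)^{k(n-k)}\nu\langle\alpha,\beta\rangle\Omega$; graded commutativity $\star\beta\wedge\alpha=(-1)^{k(n-k)}\alpha\wedge\star\beta$ then delivers (iii). For (vii), the same identity with $\gamma=\alpha$ and $\delta=\star\beta$ gives $\alpha\wedge\star\beta=\langle\star\beta,\star\alpha\rangle\Omega$; comparing with (iii) yields $\langle\star\beta,\star\alpha\rangle=\nu\langle\alpha,\beta\rangle$, and Hermitian conjugation via $\langle x,y\rangle^*=\langle y,x\rangle$ produces (vii). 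The principal obstacle throughout is the sign and scalar accounting: keeping $\sqrt{|g|}$, $g$, $|g|$, and $\nu$ straight, and tracking the conjugate $\A$-linearity of $\star$ against the $\A$-linearity of $\langle\cdot,\cdot\rangle$; commutativity of $\A$ together with self-adjointness of these scalars keeps the bookkeeping tractable.
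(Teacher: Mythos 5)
Your proof is correct; the paper omits the argument entirely (it says only ``straightforward computation\ldots similar to the ordinary case''), and your computation is exactly that argument carried out in full, with the conjugate-linearity of $\star$ against the $\A$-linearity of $\langle\cdot,\cdot\rangle$ and the self-adjointness of $g$, $|g|$, $\sqrt{|g|}$, $\nu$ correctly tracked. All the sign and scalar bookkeeping checks out, including the key steps $g/|g|=\nu^{-1}=\nu$ in (vi), the sufficiency of testing the defining relation on a basis by non-degeneracy, and the cancellation of the invertible coefficient of $\Omega$ when comparing (iii) with $\alpha\wedge\star\beta=\langle\star\beta,\star\alpha\rangle\,\Omega$ in (vii).
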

\begin{proof}
The proof is a straightforward computation and is similar to the ordinary case. 
\end{proof}
All these results, also hold for $V^\sharp$.
\section{\bf Operator-valued Metrics on Manifolds}
We now consider operator valued metrics on manifolds. 
\begin{definition}
A $\mathfrak{A}$-valued semi-Riemannian metric on an smooth manifold $M$ is a smooth map 
$\langle .,.\rangle :TM^{\A}\oplus TM^{\A} \longrightarrow \A$ such that for each $p\in M$, $\langle .,.\rangle$ restricts to a $\A$-valued inner product on $T_{p}M^{\A}$ that is compatible with its natural involution.
\end{definition}
This definition implies that inner product of any two ordinary vector fields on $M$ is an $\A_{\R}$-valued function on $M$.

For each $p\in M$, denote the signature of the inner product on $T_pM$ by $\nu_p$. The map $\nu :p\mapsto \nu_p$ is called the signature function of the metric. This function is continuous on $M$ and we can prove that it is constant on the each connected components of $M$.
\begin{theorem}
Let $M$ be a connected manifold and $\langle .,.\rangle$ be a semi-Riemannian $\A$-valued metric on it. If $\nu$ is the signature function of the metric, then $\nu$ is constant.
\end{theorem}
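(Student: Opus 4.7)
The plan is to reduce the claim to a continuity argument combined with the Gelfand--Naimark realization $\A\cong C(X)$ for some compact Hausdorff space $X$. First I would show that $\nu:M\to\A$ is norm-continuous. Then for every $x\in X$ the scalar evaluation $\varphi_{x}(p):=\mathrm{ev}_{x}(\nu_{p})$ is a continuous function $M\to\mathbb{C}$, and the spectral condition $\sigma(\nu_{p})\subseteq\{-1,1\}$ forces $\varphi_{x}(p)\in\{-1,1\}$. Connectedness of $M$ then makes each $\varphi_{x}$ constant, and since elements of $C(X)$ are determined by their pointwise values on $X$, $\nu$ itself must be constant on $M$.

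To establish norm-continuity of $\nu$, I work locally. Pick a chart $(U,\varphi)$ and a smooth ordinary frame $\{E_{i}\}$ of $\mathfrak{X}(M)$ on $U$. The components $g_{ij}=\langle E_{i},E_{j}\rangle:U\to\A_{\mathbb{R}}$ are smooth by hypothesis, hence so is $g:=\det(g_{ij}):U\to\A_{\mathbb{R}}$; by Theorem 5.3 each $g(p)$ is invertible in $\A$. Banach-algebra inversion is norm-continuous on the invertibles, so $p\mapsto g(p)^{-1}$ is continuous. Continuity of $p\mapsto|g(p)|=\sqrt{g(p)^{2}}$ follows from the continuous functional calculus: for self-adjoint $A$ with $\|A\|\le R$, the element $|A|$ is obtained by applying the continuous function $t\mapsto\sqrt{t}$ on $[0,R^{2}]$, a function that can be uniformly approximated by polynomials on that compact interval, so $A\mapsto|A|$ is norm-continuous on each norm-bounded set of $\A_{\mathbb{R}}$. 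Combining these facts, $\nu=|g|\,g^{-1}$ is continuous on $U$. Since the signature is independent of the frame used to compute it, these local expressions agree on overlaps and glue to a globally continuous map $\nu:M\to\A$.

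The main technical point is the norm-continuity of the square-root map on $\A_{\mathbb{R}_{+}}$; everything else is either general Banach-algebra theory (continuity of inversion), a spectral observation ($\sigma(\nu_{p})\subseteq\{-1,1\}$, already noted after the signature's definition), or the elementary topological fact that a continuous map from a connected space into a discrete two-point set is constant. Once the continuity of $\nu$ is in hand, the Gelfand--Naimark step is essentially formal, so the bulk of the work really lies in the local smoothness/continuity analysis of $g$, $g^{-1}$, and $|g|$ above.
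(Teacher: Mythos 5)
Your proof is correct, but it takes a genuinely different route from the paper's at the decisive step, and it also supplies something the paper only asserts. The paper takes continuity of $\nu$ for granted (it is stated without proof just before the theorem) and concentrates on showing that the set $N=\{a\in\A \mid a^*=a,\ a^2=1\}$ containing the values of $\nu$ is \emph{discrete}: for distinct $a,b\in N$ one computes $\Vert a-b\Vert^2=2\Vert 1-ab\Vert\geq 4$, since $-1\in\sigma(ab)$ forces $2\in\sigma(1-ab)$; a continuous map from a connected space into a discrete set is then constant. You instead reduce to scalars via the Gelfand transform: composing $\nu$ with each character $\mathrm{ev}_x$ gives a continuous $\{-1,1\}$-valued function on $M$, hence constant, and characters separate points of $C(X)$. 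Both arguments are sound. Your Gelfand reduction is more elementary but leans on commutativity of $\A$ (harmless here, since the paper assumes it throughout), whereas the paper's norm estimate is an intrinsic $C^*$-argument whose separation constant $\Vert a-b\Vert\geq 2$ is a sharper piece of information. On the other hand, your careful verification that $\nu=|g|\,g^{-1}$ is norm-continuous --- smoothness of $g_{ij}$ and $\det$, continuity of inversion on the group of invertibles, and continuity of $A\mapsto\sqrt{A}$ on norm-bounded sets of positive elements via polynomial approximation of $t\mapsto\sqrt t$ together with frame-independence for gluing --- fills a real gap that the paper leaves implicit, and is the more valuable part of your write-up.
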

\begin{proof}
First, consider the following subset of $\A$.
\[
N=\{ a\in\A\ \vert\ a^*=a\ ,\ a^2=1\}
\]
Clearly, $N$ is nonempty and the values of $\nu$ are in $N$. We can prove that the distance of any two distinct elements of $N$ is greater than 2. suppose $a,b\in N$ and $a\neq b$. So, 
\[
\Vert a-b\Vert^2=\Vert(a-b)^2\Vert=\Vert a^2+b^2-2ab\Vert=2\Vert(1-ab)\Vert
\]
Since $(ab)^2=a^2b^2=1$, we have $\sigma(ab)\subset\{-1,1\}$. -1 must be in $\sigma(ab)$, otherwise $\sigma(ab)=\{ 1\}$ that implies $ab=1$, hence $a=b$ that is contrary to assumption. So, $2\in\sigma(1-ab)$ that implies $\Vert(1-ab)\Vert\geq 2$. Consequently,
\[
\Vert a-b\Vert^2=2\Vert(1-ab)\Vert\geq 4\ \Rightarrow\ \Vert a-b\Vert\geq 2
\]
This property of $N$ implies that the induced topology on $N$ is discrete, on the other hand $\nu(M)$ must be connected, so $\nu(M)$ is a singleton and $\nu$ is constant. 
\end{proof}

If $M$ is an oriented manifold, then for each $p\in M$ we can use any positive oriented basis in $T_pM$ to define an orientation for $T_pM^{\A}$. Let $\Omega_p$ be the canonical volume form on $T_{p}M^{\A}$, this gives rise to a globally defined $\A$-valued volume form $\widetilde\Omega$ over $M$. In a positive coordinate system $(U,x^1,\cdots,x^n)$, if we put
$g_{ij}=\langle \frac{\partial}{\partial x^i}, \frac{\partial}{\partial x^j} \rangle$ and
$g=\mathrm{det}(g_{ij})$, then 
\[
\widetilde{\Omega}=\sqrt{\left\vert g\right\vert }\,\mathrm{d}x^1\wedge\cdots \wedge\mathrm{d}x^n.
\]
We could also define the Hodge star operator on $\mathfrak{A}$-valued differential forms. Here, for each $p\in M$ we should consider the induced inner product on $(T_pM^{\A})^\sharp$.
\begin{definition}
The Hodge star operator
 $\star:\mathcal{A}^{k} (M,\mathfrak{A})\rightarrow \mathcal{A}^{n-k} (M,\mathfrak{A})$ maps any $k$-form $\alpha\in\mathcal{A}^{k} (M,\mathfrak{A})$ to the $(n-k)$-form $\star\alpha\in\mathcal{A}^{n-k} (M,\mathfrak{A})$ such that for any $\beta\in\mathcal{A}^{n-k} (M,\mathfrak{A})$,
\[
\alpha\wedge\beta=\langle\beta,\star\alpha\rangle\,\widetilde\Omega.
\]
\end{definition}

Our previous results now imply that 
\[
\star(1)=\nu\,\widetilde\Omega,\quad \star(\widetilde\Omega )=1,\quad\alpha\wedge\star\beta=(\beta\wedge\star\alpha)^{\ast}.
\]

For each $\alpha\in\mathcal{A}^{k} (M,\mathfrak{A})$, Also we have the identity
 \begin{align}
\label{hog} 
\star\star(\alpha)=(-1)^{k(n-k)}\,\nu\,\alpha.
\end{align}

Using the Hodge star operator, one can define a new operator, called the coderivative, denoted by
$\delta$.
\begin{definition}
The co-differential of $ \alpha\in\mathcal{A}^{k} (M,\mathfrak{A}) $ is $\delta\alpha\in\mathcal{A}^{k-1} (M,\mathfrak{A})$ defined by $$\delta\alpha=(-1)^{n(k+1)+1}\nu\,(\star\,\mathrm{d}\star\alpha).$$
\end{definition}

Note that the point-wise inner product, induces an $\mathfrak{A}$-valued inner product on $\mathcal{A}^{k}_c (M,\mathfrak{A})$ by
\[
(\alpha,\beta)=\int_M\alpha\wedge\star\beta=\int_M \nu\,\langle\alpha,\beta\rangle\,\widetilde\Omega.
\]
 The next theorem states that with respect to this inner product, the co-differential operator is adjoint to the differential operator.
\begin{theorem}
For any $\alpha\in\mathcal{A}^{k}_c (M,\mathfrak{A})$ and $\beta\in\mathcal{A}^{k-1}_c (M,\mathfrak{A})$,  
$(\mathrm{d}\beta ,\alpha)=(\beta , \delta\alpha)$
\end{theorem}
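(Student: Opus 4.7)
The plan is to reduce the identity to an application of Stokes' theorem (as established in Section 2 of the paper) applied to the $(n-1)$-form $\beta\wedge\star\alpha$, combined with the double-star identity \eqref{hog} to convert $\delta\alpha$ back into $d\star\alpha$. All of the genuinely operator-valued content has already been packaged into the earlier results; what is left is a Leibniz calculation together with a sign bookkeeping.

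First I would apply the Leibniz rule to the product $\beta\wedge\star\alpha$, which is a compactly supported $(k-1)+(n-k)=n-1$-form. This rule carries over verbatim for $\mathfrak{A}$-valued forms because the wedge is $C^{\infty}(M,\mathfrak{A})$-bilinear, yielding
\[
d(\beta\wedge\star\alpha)=d\beta\wedge\star\alpha+(-1)^{k-1}\beta\wedge d\star\alpha.
\]
Assuming $M$ has empty boundary (or, more generally, that supports stay away from $\partial M$), the operator-valued Stokes theorem forces $\int_M d(\beta\wedge\star\alpha)=0$, and hence
\[
(d\beta,\alpha)=\int_M d\beta\wedge\star\alpha=(-1)^{k}\int_M\beta\wedge d\star\alpha.
\]

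Next I would rewrite $(\beta,\delta\alpha)=\int_M\beta\wedge\star\delta\alpha$ in the same shape. Unfolding the definition $\delta\alpha=(-1)^{n(k+1)+1}\nu\,\star d\star\alpha$ and applying $\star$ again, I use conjugate linearity of $\star$ together with self-adjointness of $\nu$ so that the factor $\nu$ passes through unchanged, and then invoke \eqref{hog} on the $(n-k+1)$-form $d\star\alpha$ to compute
\[
\star\delta\alpha=(-1)^{n(k+1)+1+(n-k+1)(k-1)}\,\nu^{2}\,d\star\alpha=(-1)^{k}\,d\star\alpha,
\]
where the exponent reduces modulo $2$ to $2nk+2k-k^{2}\equiv k$ and $\nu^{2}=1$. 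Thus $(\beta,\delta\alpha)=(-1)^{k}\int_M\beta\wedge d\star\alpha$, which matches the expression for $(d\beta,\alpha)$ found above, completing the identity.

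The only real obstacle is sign bookkeeping: the factor $(-1)^{n(k+1)+1}$ in the definition of $\delta$ is designed precisely so that, once combined with the $(-1)^{(n-k+1)(k-1)}$ coming from $\star\star$ on forms of degree $n-k+1$ and with the $(-1)^{k-1}$ coming from Leibniz, the two sides land on the same signed multiple of $\int_M\beta\wedge d\star\alpha$. Verifying this requires nothing more than commutativity of $\mathfrak{A}$ and self-adjointness of $\nu$ and of the real scalars $\pm 1$, both of which ensure that the conjugate linearity of $\star$ does not alter the real coefficients.
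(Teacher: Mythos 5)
Your proof is correct and follows essentially the same route as the paper's: both sides are reduced to $(-1)^k\int_M\beta\wedge\mathrm{d}\star\alpha$, the left via the Leibniz rule and Stokes' theorem applied to $\beta\wedge\star\alpha$, the right by unfolding $\delta$ and invoking the identity $\star\star=(-1)^{j(n-j)}\nu$ on the $(n-k+1)$-form $\mathrm{d}\star\alpha$, with the same sign reduction $n(k+1)+1+(n-k+1)(k-1)\equiv k\pmod 2$ and $\nu^2=1$. Your explicit remarks about conjugate linearity of $\star$ and self-adjointness of $\nu$ make the argument slightly more careful than the paper's, but the substance is identical.
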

\begin{proof}
We have
 \begin{align*}
(\mathrm{d}\beta ,\alpha)=\int_M\mathrm{d}\beta\wedge\star\alpha=\int_M\mathrm{d}(\beta\wedge\star\alpha)-(-1)^{k-1}\beta\wedge\mathrm{d}\star\alpha=(-1)^k\int_M\beta\wedge\mathrm{d}\star\alpha,  
\end{align*}                                                                                                         
 On the other hand, 
\begin{align*}
(\beta , \delta\alpha)&=(\beta , (-1)^{n(k+1)+1}\nu\,(\star\,\mathrm{d}\star\alpha))\\&=\int_M\beta\wedge(-1)^{n(k+1)+1}\nu\,(\star\star\,\mathrm{d}\star\alpha)\\&=(-1)^{n(k+1)+1}\nu\,(-1)^{(n-k+1)(k-1)}\nu\,\int_M\beta\wedge\mathrm{d}\star\alpha\\&=
(-1)^k\,\int_M\beta\wedge\mathrm{d}\star\alpha.
\end{align*} 
\end{proof}

Using the operator $\delta$, we can define Laplacian operator $\Delta$ on $\mathfrak{A}$-valued differential forms, as follows
 \[
\Delta=(\mathrm{d}\delta+\delta\mathrm{d}):\mathcal{A}^{k} (M,\mathfrak{A})\longrightarrow\mathcal{A}^{k} (M,\mathfrak{A}).
\]
In particular, for a function $f$ ($0$-form) we find
 \[
 \Delta f=-\nu\,\star\,\ud\star\ud f=-\frac{1}{\sqrt{\left\vert g\right\vert }}\sum\partial_i(g^{ij}\,\sqrt{\left\vert \textstyle{g}\right\vert }\,\partial_jf).
 \]
\section {\bf The Levi-Civita Connection of Operator-valued Metrics }

Suppose that $M$ is an $\mathfrak{A}$-valued semi-Riemannian manifold, We say that a $\mathfrak{A}$-connection $\nabla$ is compatible with $\langle .,.\rangle$ if for any three vector fields $\TX, \TY, \tilde{Z}\in \XA$.
\begin{align}
\TX\langle \TY,\tilde{Z}\rangle=\langle \nabla_{\TX} \TY,\tilde{Z}\rangle +\langle \TY,\nabla_{\TX^*} \tilde{Z}\rangle
\end{align}
This equality holds iff it is hold for ordinary vector fields in $\XM$.
\begin{lemma} \label{aylin}
Suppose that $M$ is an $\mathfrak{A}$-valued semi-Riemannian manifold. For $\widetilde{V}\in \mathfrak{X}(M)^{\mathfrak{A}}$, let $\widetilde{V}^{\flat}$ be the $\mathfrak{A}$-valued one-form on $M$ given by
  \[
  \widetilde{V}^{\flat}(\TX)=\langle \widetilde{V},\TX^*\rangle, ~~~~~\mbox{for all}\quad \TX\in \XA.
  \]
Then the map $\widetilde{V}\longrightarrow\widetilde{V}^{\flat}$ is a $C^{\infty}(M,\mathfrak{A})$-module isomorphism from $\mathfrak{X}(M)^{\mathfrak{A}}$ onto $\mathcal{A}^{1}(M,\mathfrak{A})$.
\end{lemma}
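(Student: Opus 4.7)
The plan is to verify the four ingredients of being a $C^{\infty}(M,\A)$-module isomorphism: (a) that $\widetilde{V}^{\flat}$ actually lies in $\mathcal{A}^{1}(M,\A)$; (b) that the assignment $\widetilde{V}\mapsto\widetilde{V}^{\flat}$ is $C^{\infty}(M,\A)$-linear; (c) injectivity; (d) surjectivity. Steps (a)--(c) follow directly from the inner-product axioms of Section~5; the main work sits in (d), where a smooth preimage must be exhibited globally.

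For (a), axioms (i) and (ii) yield, for the second slot, the identity $\langle a,fb\rangle=\langle fb,a\rangle^{*}=(f\langle b,a\rangle)^{*}=\langle a,b\rangle\,f^{*}$. Hence
\[
\widetilde{V}^{\flat}(f\TX)=\langle \widetilde{V},(f\TX)^{*}\rangle=\langle \widetilde{V},f^{*}\TX^{*}\rangle=\langle \widetilde{V},\TX^{*}\rangle\,f=f\,\widetilde{V}^{\flat}(\TX),
\]
the last step using commutativity of $\A$. Step (b) is immediate from $\A$-linearity of $\langle\cdot,\cdot\rangle$ in the first slot. For (c), if $\widetilde{V}^{\flat}=0$ then $\langle \widetilde{V},\TX^{*}\rangle=0$ for every $\TX\in\XA$, and because $*$ is a bijection on $\XA$ this gives $\langle \widetilde{V},\TY\rangle=0$ for all $\TY$. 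Restricting to an arbitrary $p\in M$ and noting that every vector in $T_{p}M^{\A}$ is realized by a smooth $\A$-vector field, axiom (iii) applied pointwise forces $\widetilde{V}_{p}=0$.

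For (d), I would argue locally and then globalize. On a chart $(U,x^{1},\ldots,x^{n})$ the coordinate fields $\partial_{i}$ are self-conjugate, so $g_{ij}=\langle\partial_{i},\partial_{j}\rangle$ lies in $\A_{\R}$, and by Theorem~\ref{ndeg} applied pointwise $\det(g_{ij})$ is invertible in $\A$ throughout $U$. Since determinant and adjugate are polynomial in the entries and inversion is continuous on the open set of invertibles in $\A$, the inverse matrix $(g^{ij})$ is again smooth on $U$. Given $\omega\in\mathcal{A}^{1}(M,\A)$ with $\omega|_{U}=\omega_{j}\,\ud x^{j}$, seeking $\widetilde{V}|_{U}=V^{i}\partial_{i}$ and using $\partial_{j}^{*}=\partial_{j}$, the condition $\widetilde{V}^{\flat}(\partial_{j})=\omega_{j}$ reads $V^{i}g_{ij}=\omega_{j}$, solved (by commutativity) by $V^{k}=\omega_{j}\,g^{jk}\in C^{\infty}(U,\A)$. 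By the injectivity already proved in (c), these local solutions coincide on chart overlaps and patch into a single global $\widetilde{V}\in\XA$ with $\widetilde{V}^{\flat}=\omega$.

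The only non-routine point is smoothness of the local coefficients $V^{k}$, which reduces to smoothness of matrix inversion in $\A$ and hence to the pointwise invertibility from Theorem~\ref{ndeg}; every other step is a direct unwinding of the $\A$-valued inner-product axioms together with commutativity of $\A$.
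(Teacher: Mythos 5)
Your proof is correct, and it follows the same route the paper intends: the paper's entire proof is the single sentence ``Nondegeneracy of the metric give the result,'' and your argument is a faithful, fully worked-out elaboration of exactly that idea (pointwise non-degeneracy via Theorem~\ref{ndeg}, giving invertibility and smoothness of $(g_{ij})^{-1}$ in a chart, hence local solvability, plus patching). No gaps; you have simply supplied the details the authors omitted.
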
 
\begin{proof}
Nondegeneracy of the metric give the result.
\end{proof}
\begin{theorem} 
 If $M$ is an $\mathfrak{A}$-valued semi-Riemannian manifold, then
there exists a unique torsion free connection $\nabla$ that is compatible with the metric.
\end{theorem}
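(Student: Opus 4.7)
The plan is to adapt the classical Koszul-formula argument to the operator-valued setting, reducing the compatibility and torsion-freeness conditions to ordinary vector fields and using Lemma \ref{aylin} to convert $\A$-valued $1$-forms back into $\A$-vector fields.

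For uniqueness, suppose $\nabla$ is any torsion-free compatible $\A$-connection. Restricting to ordinary $X,Y,Z\in\XM$, the relation $X^{\ast}=X$ collapses compatibility to the familiar Riemannian form $X\langle Y,Z\rangle=\langle\nabla_X Y,Z\rangle+\langle Y,\nabla_X Z\rangle$, while torsion-freeness reads $\nabla_X Y-\nabla_Y X=[X,Y]$. Writing the three cyclic permutations of the compatibility equation and forming the combination (first) $+$ (second) $-$ (third), then using torsion-freeness to eliminate the mixed terms, gives the Koszul identity
\[
2\langle\nabla_X Y,Z\rangle=X\langle Y,Z\rangle+Y\langle Z,X\rangle-Z\langle X,Y\rangle+\langle[X,Y],Z\rangle-\langle[X,Z],Y\rangle-\langle[Y,Z],X\rangle .
\]
For fixed $X,Y\in\XM$ the right-hand side is an element of $C^{\infty}(M,\A)$ depending $C^{\infty}(M)$-linearly on $Z$, so it extends $C^{\infty}(M,\A)$-linearly to an element of $\mathcal{A}^{1}(M,\A)$. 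Lemma \ref{aylin} then pins down $\nabla_X Y\in\XA$ uniquely, and the axioms $(i),(ii)$ of an $\A$-connection force the extension to all of $\XA\times\XA$ to be unique as well.

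For existence I would reverse the argument. For $X,Y,Z\in\XM$, define $\omega_{X,Y}(Z)$ to be half of the right-hand side of the Koszul identity. A direct verification shows that $\omega_{X,Y}(Z)$ is $C^{\infty}(M)$-linear in $Z$, the non-tensorial contributions (from $Y$ differentiating $f$ in $\omega_{X,Y}(fZ)$, from $[X,fZ]=fX[Z]+X(f)Z$, etc.) cancelling exactly as in the classical proof. Extending $\omega_{X,Y}$ $C^{\infty}(M,\A)$-linearly and applying Lemma \ref{aylin} yields $\nabla_X Y\in\XA$ with $\langle\nabla_X Y,Z\rangle=\omega_{X,Y}(Z)$ for all $Z\in\XM$. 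I would then extend $\nabla$ to $\XA\times\XA$ by prescribing
\[
\nabla_{X\otimes h}(Y\otimes k)=hk\,\nabla_X Y+h\,X(k)\,Y
\]
on simple fields and extending bilinearly; well-definedness on the tensor product $\XM\otimes_{C^{\infty}(M)}C^{\infty}(M,\A)$ and the axioms $(i),(ii)$ follow by calculations modelled on the text's extension procedure for ordinary connections. Torsion-freeness on $\XM\times\XM$ follows from the identity $\omega_{X,Y}(Z)-\omega_{Y,X}(Z)=\langle[X,Y],Z\rangle$, which is immediate from the defining formula; by non-degeneracy and tensoriality this extends to all of $\XA\times\XA$. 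Metric compatibility on ordinary fields is the identity $\omega_{X,Y}(Z)+\omega_{X,Z}(Y)=X\langle Y,Z\rangle$, again direct from the formula, and the full version involving $\nabla_{\TX^{\ast}}$ then follows from the equivalence recorded in the text immediately after the definition of compatibility.

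The step I expect to be most delicate is the interaction with the involution. One must check that the $\nabla_X Y$ produced by Lemma \ref{aylin} is self-conjugate whenever $X,Y\in\XM$, which follows because the Koszul right-hand side takes values in $\A_{\mathbb{R}}$ and the inner product is involution-compatible; and one must verify that the asymmetric compatibility condition carrying $\nabla_{\TX^{\ast}}$ on the second slot remains consistent with the $\A$-linear extension of $\nabla$ in the first slot. This last check is precisely where commutativity of $\A$ and the conjugate-linearity of the inner product in its second argument are used essentially, via the matching pair $\nabla_{h\TX}=h\nabla_{\TX}$ and $\langle\TY,h^{\ast}W\rangle=h\langle\TY,W\rangle$. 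Beyond this, the argument is a routine but careful imitation of the classical Levi-Civita computation.
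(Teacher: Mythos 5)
Your proposal is correct and follows essentially the same route as the paper: the paper likewise writes down the Koszul expression (directly for $\A$-vector fields, with the involution inserted, $\mu_{\TX,\TY}(\TZ)=\TX\langle\TY,\TZ^*\rangle+\cdots$), checks that it is $C^{\infty}(M,\A)$-linear in $\TZ$, and invokes Lemma \ref{aylin} to produce the unique field $2\,\nabla_{\TX}\TY$. The only presentational difference is that you construct $\nabla$ on ordinary vector fields first and then extend by the connection axioms and the tensor-product structure, whereas the paper works with general $\A$-vector fields from the outset; the involution bookkeeping you rightly flag as the delicate step is exactly what the paper's starred formula absorbs.
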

\begin{proof}
Fix $\TX, \TY\in \XA$, and define $\mu_{\TX,\TY}:\XA
\longrightarrow C^{\infty}(M,\mathfrak{A})$ by 
\[\begin{array}{rl}
\mu_{\TX,\TY}(\TZ)=&\TX\langle \TY,\TZ^*\rangle+ \TY\langle \TZ,\TX^*\rangle-\TZ\langle \TX,\TY^*\rangle \\
 &+\langle[\TX,\TY],\TZ^*\rangle-\langle[\TY,\TZ],\TX^*\rangle+\langle[\TZ,\TX],\TY^*\rangle.
\end{array}\]
A straightforward computation shows that the map $\TZ\mapsto\mu_{X,Y}(\TZ) $ is $C^{\infty}(M,\A)$-linear and is a $\mathfrak{A}$-valued one-form. By the lemma \ref{aylin}, there is an unique $\mathfrak{A}$-vector field, denoted by $2\,\nabla_{\TX}\TY$, such that $\mu_{\TX,\TY}(\TZ)=2\,\langle  \nabla_{\TX}\TY, \TZ^*\rangle$ for all $\TZ\in \XA$. Now, standard argument shows that $\nabla$ is the unique torsion free $\mathfrak{A}$-connection that is compatible with the metric. This connection is called the Levi-Civita connection of the metric. 
\end{proof}
\begin{proposition} \label{bian2}
If  $\TX,\TY,\TZ,\tilde{V}\in\XA$, then for the $\mathfrak{A}$-curvature tensor of the Levi-Civita connection, we have 
\begin{enumerate}
\item[(i)] $\langle\mathcal{R}(\TX,\TY)(\TZ),\tilde{V}\rangle=-\langle\mathcal{R}(\TX,\TY)(\tilde{V}^{\ast}),\TZ^{\ast}\rangle$;
\item[(ii)] $\langle\mathcal{R}(\TX,\TY)(\TZ),\tilde{V}\rangle=\langle\mathcal{R}(\TZ,\tilde{V}^{\ast})(\TX),\TY^{\ast}\rangle$.
\end{enumerate}
\end{proposition}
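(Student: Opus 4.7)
The plan is to mirror the classical semi-Riemannian derivation of the two symmetries, inserting the involution in the places dictated by the compatibility condition satisfied by the Levi-Civita connection. As a preliminary I would first record that the Levi-Civita connection intertwines the involution, i.e.\ $(\nabla_{\TX}\TY)^{\ast}=\nabla_{\TX^{\ast}}\TY^{\ast}$, and hence $(\mathcal{R}(\TX,\TY)\TZ)^{\ast}=\mathcal{R}(\TX^{\ast},\TY^{\ast})\TZ^{\ast}$. The first identity follows from the Koszul formula of the preceding theorem by checking that $\mu_{\TX,\TY}(\TZ)^{\ast}=\mu_{\TX^{\ast},\TY^{\ast}}(\TZ^{\ast})$, a term-by-term application of $(\TX f)^{\ast}=\TX^{\ast}f^{\ast}$, $[\TX,\TY]^{\ast}=[\TX^{\ast},\TY^{\ast}]$ and $\langle u,v\rangle^{\ast}=\langle u^{\ast},v^{\ast}\rangle$; the second then follows from the defining formula for $\mathcal{R}$ together with $[\TX,\TY]^{\ast}=[\TX^{\ast},\TY^{\ast}]$.

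For (i) I would iterate the compatibility relation to expand $\TX\TY\langle\TZ,\tilde{V}\rangle$ as the sum of the four terms $\langle\nabla_{\TX}\nabla_{\TY}\TZ,\tilde{V}\rangle$, $\langle\nabla_{\TY}\TZ,\nabla_{\TX^{\ast}}\tilde{V}\rangle$, $\langle\nabla_{\TX}\TZ,\nabla_{\TY^{\ast}}\tilde{V}\rangle$, $\langle\TZ,\nabla_{\TX^{\ast}}\nabla_{\TY^{\ast}}\tilde{V}\rangle$. Subtracting the analogue with $\TX,\TY$ swapped cancels the two mixed terms, and comparing the result with the expansion of $[\TX,\TY]\langle\TZ,\tilde{V}\rangle$ via compatibility (using $[\TX,\TY]^{\ast}=[\TX^{\ast},\TY^{\ast}]$) produces the intermediate identity
\[
\langle\mathcal{R}(\TX,\TY)\TZ,\tilde{V}\rangle=-\langle\TZ,\mathcal{R}(\TX^{\ast},\TY^{\ast})\tilde{V}\rangle.
\]
To recast this as the stated form of (i), I rewrite the right-hand side first by $\langle a,b\rangle=\langle b,a\rangle^{\ast}$, then by involution-compatibility of the metric, and finally by the preliminary identity $(\mathcal{R}(\TX^{\ast},\TY^{\ast})\tilde{V})^{\ast}=\mathcal{R}(\TX,\TY)\tilde{V}^{\ast}$, which together convert the right-hand side into $-\langle\mathcal{R}(\TX,\TY)\tilde{V}^{\ast},\TZ^{\ast}\rangle$.

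For (ii) I would carry out the classical cyclic-sum argument: write the first Bianchi identity of Proposition~\ref{bian1} for four suitably permuted cyclic triples drawn from $\{\TX,\TY,\TZ,\tilde{V}^{\ast}\}$, pair each with the remaining field starred or unstarred as needed, and add the resulting four scalar identities. Using only the obvious antisymmetry $\mathcal{R}(\TA,\TB)=-\mathcal{R}(\TB,\TA)$ together with (i) just established, eight of the twelve terms cancel in pairs, leaving $2\langle\mathcal{R}(\TX,\TY)\TZ,\tilde{V}\rangle-2\langle\mathcal{R}(\TZ,\tilde{V}^{\ast})\TX,\TY^{\ast}\rangle=0$, which is (ii).

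The main obstacle I anticipate is not conceptual but notational: every place where the classical Riemannian proof invokes antisymmetry of $R$ in its second pair of arguments must here be replaced by an application of (i), which moves an involution across the inner product and flips which arguments are starred. The four Bianchi instances in (ii) therefore have to be chosen so that the involutions land on precisely the slots demanded by the target identity. Once the preliminary lemma $(\mathcal{R}(\TX,\TY)\TZ)^{\ast}=\mathcal{R}(\TX^{\ast},\TY^{\ast})\TZ^{\ast}$ is secured, the remainder of both parts is careful bookkeeping of $\ast$'s rather than new ideas.
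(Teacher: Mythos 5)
Your argument is correct and is exactly the adaptation of the classical semi-Riemannian proof that the paper itself invokes (its own ``proof'' is only a citation to O'Neill). The one genuinely new ingredient you supply --- the involution-equivariance $(\nabla_{\TX}\TY)^{\ast}=\nabla_{\TX^{\ast}}\TY^{\ast}$ read off from the Koszul formula, hence $(\mathcal{R}(\TX,\TY)\TZ)^{\ast}=\mathcal{R}(\TX^{\ast},\TY^{\ast})\TZ^{\ast}$ --- is precisely what is needed to convert the intermediate identity $\langle\mathcal{R}(\TX,\TY)\TZ,\tilde{V}\rangle=-\langle\TZ,\mathcal{R}(\TX^{\ast},\TY^{\ast})\tilde{V}\rangle$ into the stated form of (i), and your reduction of (ii) to the standard four-instance Bianchi cancellation (via the substitution $\tilde{V}\mapsto\tilde{V}^{\ast}$, which makes (i) into a genuine antisymmetry in the last pair) is sound.
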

\begin{proof}
The proof is similar to the method of Semi-Riemannian manifolds (cf. \cite{one}).
\end{proof}

On $\mathfrak{A}$-valued semi-Riemannian manifolds we can straightly generalize differential operators such as gradient, and divergence.
\begin{definition}
The gradient of a function $f\in C^{\infty}(M,\mathfrak{A}) $ is $\mathfrak{A}$-vector
field that is equivalent to the differential $\ud f\in \mathcal{A}^{1}(M,\mathfrak{A})$, Thus
\[
\langle\nabla f, \TX^*\rangle=\TX(f)\qquad \forall\,\TX\in \XA
\]
in terms of a coordinate system, $\nabla f={\textstyle \partial f\over\textstyle \partial x^i}\,g^{ij} {\textstyle \partial\over\textstyle \partial x^j}$.
\end{definition}
\begin{definition}
The Hessian of a function $f\in C^{\infty}(M,\mathfrak{A}) $ is its second covariant derivative $\mathrm{Hess}(f)=\nabla(\ud f)$.
\end{definition}
The Hessian of $f$ is a symmetric $(0,2)$ $ \mathfrak{A} $-tensor field and its operation on vector fields $\TX,\TY\in\XA$ is as follows 
\[
\mathrm{Hess}(f)(\TX,\TY)=\TX(\TY f)-\langle \nabla f,\nabla_{\TX^*}\TY^* \rangle=\langle \nabla_{\TX}(\nabla f),\TY^*\rangle.
\]
\begin{definition}
If $\widetilde{X}$ is an $ \mathfrak{A} $-vector field, the contraction of its covariant differential is called divergence of $\widetilde{X}$ and is denoted by $\mathrm{div}(\widetilde{X})\in C^{\infty}(M,\mathfrak{A})$. In a  coordinate system, $\mathrm{div}(\widetilde{X})=g^{ij}\,\langle\nabla_{\partial_i}\widetilde{X},\partial_j\rangle$. 
\end{definition}
\begin{theorem}
Let $M$  be an oriented $\A$-valued semi-Riemannian manifold, and $\tilde{\Omega}$ be its canonical volume form. Then, for any $\A$-vector field $\TX\in\XA$ we have
\[
L_{\TX}\tilde{\Omega}=\mathrm{div}(\TX)\tilde{\Omega}
\]
\end{theorem}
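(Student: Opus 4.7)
The plan is to reduce to a local calculation via Cartan's formula, which the paper already established for $\A$-valued differential forms. In a positively oriented chart $(U, x^1,\ldots, x^n)$, I will write $\tilde{\Omega} = \sqrt{|g|}\, dx^1\wedge\cdots\wedge dx^n$ and $\tilde{X} = X^i\,\partial_i$, where $X^i \in C^\infty(M,\A)$. Since $\tilde{\Omega}$ is a top-degree $\A$-form on an $n$-manifold, we have $d\tilde{\Omega} = 0$, so Cartan's formula collapses to
\[
L_{\tilde{X}}\tilde{\Omega} = d\bigl(i_{\tilde{X}}\tilde{\Omega}\bigr).
\]

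Next, I would compute the interior product and its exterior derivative directly. Writing
\[
i_{\tilde{X}}\tilde{\Omega} = \sum_i (-1)^{i-1}\, X^i\,\sqrt{|g|}\;dx^1\wedge\cdots\wedge\widehat{dx^i}\wedge\cdots\wedge dx^n,
\]
a standard computation, valid verbatim in the $\A$-valued setting because $\A$ is commutative and the coordinate functions commute with everything, gives
\[
d\bigl(i_{\tilde{X}}\tilde{\Omega}\bigr) = \sum_i \partial_i\bigl(X^i\sqrt{|g|}\bigr)\; dx^1\wedge\cdots\wedge dx^n = \frac{1}{\sqrt{|g|}}\sum_i \partial_i\bigl(X^i\sqrt{|g|}\bigr)\;\tilde{\Omega}.
\]

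It remains to identify the scalar coefficient with $\mathrm{div}(\tilde{X}) = g^{ij}\,\langle \nabla_{\partial_i}\tilde{X},\partial_j\rangle$. Because each coordinate field $\partial_i$ is self-conjugate, the compatibility condition for the Levi-Civita connection reduces, when tested against $\partial_i,\partial_j,\partial_k \in \XM$, to the classical Koszul identity, so the Christoffel symbols are given by the usual formula $\Gamma^l_{ij} = \tfrac{1}{2}g^{lm}(\partial_i g_{jm} + \partial_j g_{im} - \partial_m g_{ij})$. Expanding $\nabla_{\partial_i}\tilde{X} = (\partial_i X^l + \Gamma^l_{ik} X^k)\partial_l$ and using $g^{ij}g_{jl} = \delta^i_l$, I get $\mathrm{div}(\tilde{X}) = \partial_i X^i + \Gamma^i_{ik} X^k$. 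The contraction of Christoffel symbols simplifies via
\[
\Gamma^i_{ik} = \tfrac{1}{2}\,g^{im}\partial_k g_{im},
\]
and applying the Jacobi-type formula $\partial_k g = g\cdot g^{im}\partial_k g_{im}$ (valid in commutative $\A$ by the usual cofactor expansion) together with the continuous functional calculus yields $\Gamma^i_{ik} = \tfrac{1}{\sqrt{|g|}}\partial_k\sqrt{|g|}$. Therefore
\[
\mathrm{div}(\tilde{X}) = \partial_i X^i + \frac{1}{\sqrt{|g|}}\partial_k(\sqrt{|g|})\,X^k = \frac{1}{\sqrt{|g|}}\partial_i\bigl(X^i\sqrt{|g|}\bigr),
\]
which matches the coefficient obtained above, completing the identification $L_{\tilde{X}}\tilde{\Omega} = \mathrm{div}(\tilde{X})\,\tilde{\Omega}$.

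The main obstacle I anticipate is justifying the determinant derivative formula $\partial_k g = g\cdot g^{ij}\partial_k g_{ij}$ for an $\A$-valued invertible symmetric matrix whose entries depend smoothly on the base point: the cofactor expansion and product rule go through because $\A$ is commutative, and the passage to $\sqrt{|g|}$ is then legitimate through the continuous functional calculus applied pointwise, since $g$ is a self-adjoint invertible element of $\A$ at each point (so $|g|$ is positive and $\sqrt{|g|}$ is smooth in the base variable). Once this is accepted, the rest of the argument is a direct transcription of the classical semi-Riemannian proof, and in particular no use of the involution is needed beyond observing that coordinate fields are self-conjugate.
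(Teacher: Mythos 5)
Your proof is correct and follows essentially the same route as the paper, which simply asserts that the classical coordinate computation (via Cartan's formula, the Jacobi determinant identity, and the contracted Christoffel symbols) carries over verbatim to the $\A$-valued setting; you have merely written out that computation in full, including the only genuinely new points (differentiating $\det$ and $\sqrt{|g|}$ in $\A$, and the reduction of the compatibility condition on self-conjugate coordinate fields).
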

\begin{proof}
Computations, as are done in \cite{ste} show that the equality holds in the case of scalar metrics. But, all these computations, without any change, are also valid in the case of $\A$-valued metrics, except that the functions we encountered are $\A$-valued.
\end{proof}
\section{\bf Ricci, Scalar Curvature, and Sectional Curvature}
In the past sections, we have presented the basic notions and facts about the curvature of the Levi-Civita connection of a given $\mathfrak{A}$-valued semi-Riemannian manifold. We begin to consider some  invariants that truly characterize curvature.
In this section $M$ is an $ \mathfrak{A} $-valued semi-Riemannian manifold with the $ \mathfrak{A} $-Levi-Civita connection $\nabla$.
\begin{definition}
For each $p\in M$, the Ricci curvature tensor,\\
 $ \mathcal{R}{\rm ic}_p:T_pM^{\A}\times T_pM^{\A}\longrightarrow \A $ is given by 
\[
\mathcal{R}{\rm ic}_p(\tilde{u},\tilde{v})=\mathrm{trace}\big( \ \tilde{w}\longrightarrow \mathcal{R}(\tilde{w},\tilde{u})\tilde{v}\ \big),
\]
and the scalar curvature $\mathcal{S}$ is the trace of $\mathcal{R}{\rm ic}$.
\end{definition}
In coordinate systems, 
\[
\mathcal{R}{\rm ic}(\widetilde{X},\widetilde{Y})=g^{ij}\langle\mathcal{R}(\frac{\partial}{\partial x^i},\widetilde{X})\widetilde{Y},\frac{\partial}{\partial x^j}\rangle\quad ,\quad \mathcal{S}=g^{ij}\mathcal{R}{\rm ic}(\frac{\partial}{\partial x^i},\frac{\partial}{\partial x^j})
\]

Thus, $\mathcal{R}ic$ is a symmetric $(0,2)$ tensor on $ M. $

A two-dimensional free $\mathfrak{A}$-submodule $\Pi$ of $T_pM^{\mathfrak{A}}$ is called an $\mathfrak{A}$-tangent plane to $ M $ at $p$.
For $p\in M$, $\tilde{u}, \tilde{v}\in T_pM^{\mathfrak{A}}$ define, $\mathcal{Q}(\tilde{u},\tilde{v})=\langle\tilde{u},\tilde{u}\rangle \langle\tilde{v},\tilde{v}\rangle-\langle\tilde{u},\tilde{v}\rangle\,\langle\tilde{u},\tilde{v}\rangle^{\ast}$. 
In fact, $\mathcal{Q}(\tilde{u},\tilde{v})=\langle\tilde{u}\wedge\tilde{v},\tilde{u}\wedge\tilde{v}\rangle$.
\begin{definition}
A $\mathfrak{A}$-tangent plane $\Pi$ to $ M $ is called non-degenerate if for some base $\{\tilde{u}, \tilde{v}\}$ of $\Pi$, $\mathcal{Q}(\tilde{u},\tilde{v})$ is invertible in $\mathfrak{A}$ .
\end{definition}
The invertiblility of $\mathcal{Q}(\tilde{u},\tilde{v})$ does not depend on the choice of the base. If $\{\tilde{u}, \tilde{v}\}$ is a base $ T_pM^{\mathfrak{A}}$, then $$\mathcal{K}(\tilde{u}, \tilde{v}):=\frac{\langle\mathcal{R}(\tilde{u},\tilde{v})\tilde{v}^{\ast},\tilde{u}\rangle}{\mathcal{Q}(\tilde{u},\tilde{v})}$$
is well-defined and only depends on the $2$-dimensional submodule determined
by $\tilde{u}$ and $\tilde{v}.$ 
\begin{definition}
We refer to $\mathcal{K}(\tilde{u}, \tilde{v})$ as the sectional curvature of the $2$-plane
determined by $\tilde{u}$ and $\tilde{v}.$
\end{definition}
\begin{corollary}
If $M$ has constant curvature $C$, then
\[
\mathcal{R}(\tilde{u},\tilde{v})\tilde{w}=C\{\langle\tilde{v},\tilde{w}^{\ast}\rangle\tilde{u}-\langle\tilde{u},\tilde{w}^{\ast}\rangle\tilde{v}\}.
\]
\end{corollary}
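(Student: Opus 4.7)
The plan is to adapt the classical semi-Riemannian argument: introduce the algebraic model
\[
R_0(\tilde{u},\tilde{v})\tilde{w} := \langle\tilde{v},\tilde{w}^{\ast}\rangle\tilde{u} - \langle\tilde{u},\tilde{w}^{\ast}\rangle\tilde{v},
\]
show that $\mathcal{R}$ and $C R_0$ carry the same curvature-type symmetries and realise the same sectional curvature function, and then invoke a polarisation argument to conclude that their difference vanishes identically.

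I would first verify that $R_0$ is a genuine $(1,3)$ $\mathfrak{A}$-tensor and that it enjoys all the symmetries of Propositions~\ref{bian1} and~\ref{bian2}. These reductions are pure inner-product manipulations; the key small lemma, used repeatedly, is
\[
\langle\tilde{w},\tilde{v}^{\ast}\rangle = \langle\tilde{v},\tilde{w}^{\ast}\rangle,
\]
which follows from conjugate-symmetry combined with compatibility of $\langle\cdot,\cdot\rangle$ with the involution. Commutativity of $\mathfrak{A}$ is used to reorder scalar factors so that, for instance, the three pieces of the first Bianchi identity for $R_0$ collapse to zero term by term.

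Next I would perform the model computation
\[
\langle C R_0(\tilde{u},\tilde{v})\tilde{v}^{\ast},\tilde{u}\rangle = C\bigl(\langle\tilde{u},\tilde{u}\rangle\langle\tilde{v},\tilde{v}\rangle - \langle\tilde{u},\tilde{v}\rangle\langle\tilde{u},\tilde{v}\rangle^{\ast}\bigr) = C\,\mathcal{Q}(\tilde{u},\tilde{v}).
\]
Combined with the constant-curvature hypothesis this shows that $G := \mathcal{R} - C R_0$ is a curvature-symmetric $(1,3)$-tensor whose ``diagonal'' $\langle G(\tilde{u},\tilde{v})\tilde{v}^{\ast},\tilde{u}\rangle$ vanishes whenever the plane spanned by $\tilde{u},\tilde{v}$ is non-degenerate; by $\mathfrak{A}$-polynomiality of both sides in coordinates, the identity extends to all $\tilde{u},\tilde{v}$.

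It remains to promote this diagonal vanishing to $G = 0$. The classical polarisation scheme carries over: substituting $\tilde{v}\mapsto\tilde{v}+\tilde{w}^{\ast}$ in the quadratic identity (the involution is placed on $\tilde{w}$ precisely so that the pattern $(\cdot,\cdot,\cdot^{\ast},\cdot)$ is preserved after expansion) produces a bilinear identity $\langle G(\tilde{u},\tilde{v})\tilde{w},\tilde{u}\rangle + \langle G(\tilde{u},\tilde{w}^{\ast})\tilde{v}^{\ast},\tilde{u}\rangle = 0$; a second polarisation $\tilde{u}\mapsto\tilde{u}+\tilde{z}$ then yields a four-argument identity which, after being reorganised through the involutive symmetries (i), (ii) of Proposition~\ref{bian2}, exhibits $F(\tilde{u},\tilde{v},\tilde{w},\tilde{z}) := \langle G(\tilde{u},\tilde{v})\tilde{w},\tilde{z}\rangle$ as antisymmetric in adjacent pairs of slots. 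Combined with the first Bianchi identity the totally-antisymmetrised piece is then forced to vanish --- the familiar ``$3 F = 0$'' step, valid because $3\cdot 1$ is invertible in the unital $C^{\star}$-algebra $\mathfrak{A}$. Non-degeneracy of the inner product (Theorem~\ref{ndeg}) converts $F = 0$ into $G = 0$, i.e.\ $\mathcal{R} = C R_0$.

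The main obstacle is the bookkeeping of the involutions during polarisation. Every substitution of the form $\tilde{v}\mapsto\tilde{v}+\tilde{w}^{\ast}$ creates cross-terms whose arguments carry mismatched stars, and reorganising them into a clean antisymmetry statement requires repeated use of the two identities of Proposition~\ref{bian2}, which themselves shuffle involutions between slots. Commutativity of $\mathfrak{A}$ is essential at two separate points: to reorder scalar factors inside $\langle R_0(\cdot,\cdot)\cdot,\cdot\rangle$ so that $R_0$ inherits the $\mathcal{R}$-symmetries, and to justify dividing by the integer coefficient arising from the cyclic Bianchi sum at the end of the polarisation.
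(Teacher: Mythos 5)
The paper states this corollary with no proof at all, so there is nothing to compare against; judging your argument on its own merits, the overall template is the right one and most of it checks out. Your key identity $\langle\tilde{w},\tilde{v}^{\ast}\rangle=\langle\tilde{v},\tilde{w}^{\ast}\rangle$ is correct, $R_0$ does inherit all the symmetries of Propositions~\ref{bian1} and~\ref{bian2} (commutativity of $\mathfrak{A}$ being used exactly where you say), the diagonal computation $\langle R_0(\tilde{u},\tilde{v})\tilde{v}^{\ast},\tilde{u}\rangle=\mathcal{Q}(\tilde{u},\tilde{v})$ is right, and the final polarisation--Bianchi--``$3F=0$'' step goes through verbatim since $3\cdot 1$ is invertible in the complex algebra $\mathfrak{A}$ and Theorem~\ref{ndeg} upgrades $F=0$ to $G=0$.

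The genuine gap is the sentence ``by $\mathfrak{A}$-polynomiality of both sides in coordinates, the identity extends to all $\tilde{u},\tilde{v}$.'' The hypothesis only gives $\langle\mathcal{R}(\tilde{u},\tilde{v})\tilde{v}^{\ast},\tilde{u}\rangle=C\,\mathcal{Q}(\tilde{u},\tilde{v})$ on pairs for which $\mathcal{Q}(\tilde{u},\tilde{v})$ is \emph{invertible in} $\mathfrak{A}$, and the classical extension to all pairs rests on the density of such pairs (a continuous function vanishing on a dense set vanishes everywhere). That density can fail here, because invertibility in $\mathfrak{A}=C(X)$ means nonvanishing at \emph{every} point of $X$, and the set of pairs violating this can have nonempty interior. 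Concretely, take $\mathfrak{A}=C([0,1])$ and $V=\mathbb{R}^{3}\otimes_{\mathbb{R}}\mathfrak{A}$ with the constant inner product of signature $(+,+,-)$; for $\tilde{u}=e_{1}$ and $\tilde{v}(x)=\cos(\tfrac{\pi}{2}x)\,e_{2}+\sin(\tfrac{\pi}{2}x)\,e_{3}$ one gets $\mathcal{Q}(\tilde{u},\tilde{v})(x)=\cos(\pi x)$, which changes sign on $[0,1]$. Every pair in a norm neighbourhood of $(\tilde{u},\tilde{v})$ then also has $\mathcal{Q}$ changing sign, hence vanishing somewhere, hence non-invertible: a whole open set of pairs carries no constant-curvature information whatsoever, and no continuity or polynomial-identity argument can reach it. To close the proof you need a genuinely different mechanism for this step --- for instance pushing everything down through the characters of $\mathfrak{A}$ and showing that every scalar-nondegenerate real pair at a point of the Gelfand spectrum is realised by some $\mathfrak{A}$-nondegenerate pair (which itself has topological obstructions to address), or an additional hypothesis on the metric. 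As written, your proof --- like the paper's unproved assertion --- is incomplete at precisely this point.
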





\end{document}